\tikzstyle{gate}=[fill={rgb,255: red,222; green,222; blue,222}, draw=black, shape=rectangle, minimum width=0.75cm, minimum height=0.75cm]
\tikzstyle{medgate}=[fill={rgb,255: red,222; green,222; blue,222}, draw=black, shape=rectangle, minimum width=0.5cm, minimum height=0.6cm]
\tikzstyle{minigate}=[fill={rgb,255: red,222; green,222; blue,222}, draw=black, shape=rectangle, minimum width=0.5cm, minimum height=0.5cm]
\tikzstyle{state}=[fill={rgb,255: red,222; green,222; blue,222}, regular polygon, regular polygon sides=3, draw, text width=1em, inner sep=0.5mm, outer sep=0mm, shape border rotate=90]
\tikzstyle{ministate}=[fill={rgb,255: red,222; green,222; blue,222}, regular polygon, regular polygon sides=3, draw, font={\footnotesize}, text width=0.7em, inner sep=0.2mm, outer sep=0mm, shape border rotate=90]
\tikzstyle{miniket}=[fill={rgb,255: red,222; green,222; blue,222}, regular polygon, regular polygon sides=3, draw, font={\footnotesize}, text width=0.7em, inner sep=0.2mm, outer sep=0mm, shape border rotate=270]
\tikzstyle{tinystate}=[fill={rgb,255: red,222; green,222; blue,222}, regular polygon, regular polygon sides=3, draw, font={\footnotesize}, text width=0.6em, inner sep=0.2mm, outer sep=0mm, shape border rotate=90]
\tikzstyle{tinyket}=[fill={rgb,255: red,222; green,222; blue,222}, regular polygon, regular polygon sides=3, draw, font={\footnotesize}, text width=0.6em, inner sep=0.2mm, outer sep=0mm, shape border rotate=270]
\tikzstyle{permred}=[fill={rgb,255: red,248; green,206; blue,204}, draw=black, shape=circle, font={\scriptsize}, text width=0.7em, inner sep=0.3mm, outer sep=0mm]
\tikzstyle{permblue}=[fill={rgb,255: red,218; green,232; blue,252}, draw=black, shape=circle, font={\scriptsize}, text width=0.7em, inner sep=0.3mm, outer sep=0mm]
\tikzstyle{permgreen}=[fill={rgb,255: red,218; green,252; blue,232}, draw=black, shape=circle, font={\scriptsize}, text width=0.7em, inner sep=0.3mm, outer sep=0mm]
\tikzstyle{plauette}=[fill={rgb,255: red,169; green,196; blue,235}, regular polygon, regular polygon sides=3, draw, font={\footnotesize}, text width=1.3em, inner sep=0.3mm, outer sep=0mm, shape border rotate=90]
\tikzstyle{plauetterotate}=[fill={rgb,255: red,169; green,196; blue,235}, regular polygon, regular polygon sides=3, draw, font={\footnotesize}, text width=1.3em, inner sep=0.3mm, outer sep=0mm, shape border rotate=270]
\tikzstyle{miniplauette}=[fill={rgb,255: red,169; green,196; blue,235}, regular polygon, regular polygon sides=3, draw, font={\footnotesize}, text width=0.5em, inner sep=0.2mm, outer sep=0mm, shape border rotate=90]
\algnewcommand\algorithmicinput{\textbf{Input:}}
\algnewcommand\Input{\item[\algorithmicinput]}
\algnewcommand\algorithmicoutput{\textbf{Output:}}
\algnewcommand\Output{\item[\algorithmicoutput]}
\numberwithin{equation}{section}
\declaretheoremstyle[bodyfont=\it,qed=\qedsymbol]{noproofstyle}
\declaretheorem[name=Observation,numbered=no]{observation*}
\declaretheorem[numberlike=equation]{theorem}
\declaretheorem[name=Theorem,numbered=no]{theorem*}
\declaretheorem[numberlike=equation]{lemma}
\declaretheorem[name=Lemma,numbered=no]{lemma*}
\declaretheorem[numberlike=equation]{corollary}
\declaretheorem[name=Corollary,numbered=no]{corollary*}
\declaretheorem[name=Proposition,numbered=no]{proposition*}
\declaretheorem[name=Claim,numbered=no]{claim*}
\declaretheorem[name=Conjecture,numbered=no]{conjecture*}
\declaretheorem[name=Question,numbered=no]{question*}
\declaretheoremstyle[bodyfont=\it,qed=$\lozenge$]{defstyle} 
\declaretheorem[numberlike=equation,style=defstyle]{definition}
\declaretheorem[unnumbered,name=Definition,style=defstyle]{definition*}
\declaretheorem[unnumbered,name=Example,style=defstyle]{example*}
\declaretheorem[unnumbered,name=Notation=defstyle]{notation*}
\declaretheorem[unnumbered,name=Construction,style=defstyle]{construction*}
\declaretheoremstyle[]{rmkstyle}
\newcommand{\N}{\mathbb{N}}
\newcommand{\Complex}{\mathbb{C}}
\newcommand{\Exp}{\mathop{\mathbb{{E}}}}
\newcommand{\Var}{\mathop{\mathsf{Var}}}
\newcommand{\dw}{\textsf{dw}}
\DeclarePairedDelimiter\floor{\lfloor}{\rfloor}
\newcommand{\cD}{\mathcal{D}}
\newcommand{\cE}{\mathcal{E}}
\newcommand{\cF}{\mathcal{F}}
\newcommand{\cS}{\mathcal{S}}
\title{Spoofing Linear Cross-Entropy Benchmarking in Shallow Quantum Circuits}
\author{Boaz Barak\thanks{Harvard University, \texttt{b@boazbarak.org}. Supported by NSF awards CCF 1565264 and CNS 1618026 and a Simons Investigator Fellowship.} \and
Chi-Ning Chou\thanks{Harvard University, \texttt{chiningchou@g.harvard.edu}. } 
\and
Xun Gao\thanks{Harvard University, \texttt{xungao@g.harvard.edu}.}
}
\date{\today}
\begin{document}

\maketitle
\thispagestyle{empty}

\begin{abstract}
The \emph{linear cross-entropy benchmark} (Linear XEB) has been used as a test for procedures simulating quantum circuits. Given a quantum circuit $C$ with $n$ inputs and outputs and purported simulator whose output is distributed according to a distribution $p$ over $\{0,1\}^n$, the linear XEB fidelity of the simulator is  $\cF_{C}(p) = 2^n \mathbb{E}_{x \sim p} q_C(x) -1$ where $q_C(x)$ is the probability that $x$ is output from the distribution $C \ket{0^n}$. A trivial simulator (e.g., the uniform distribution) satisfies  $\cF_C(p)=0$, while Google's noisy quantum simulation of a 53 qubit circuit $C$  achieved a fidelity value of $(2.24\pm0.21)\times10^{-3}$ (Arute et. al., Nature'19).

In this work we give a classical randomized algorithm that for a given circuit $C$ of depth $d$ with Haar random 2-qubit gates achieves in expectation a fidelity value of $\Omega(\tfrac{n}{L} \cdot 15^{-d})$ in running time $\poly(n,2^L)$.  Here $L$ is the size of the \emph{light cone} of $C$: the maximum number of input bits that each output bit depends on. In particular, we obtain a polynomial-time algorithm that achieves large fidelity of $\omega(1)$ for depth $O(\sqrt{\log n})$ two-dimensional circuits.
To our knowledge, this is the first such result for two dimensional circuits of super-constant depth.
Our results can be considered as an evidence that fooling the linear XEB test might be easier than achieving a full simulation of the quantum circuit. 
\end{abstract}

\newpage

\thispagestyle{empty}

\setcounter{page}{0}
\tableofcontents

\newpage

\section{Introduction}

\emph{Quantum computational supremacy} refers to experimental violations of the extended Church Turing Hypothesis using quantum computers.
The most famous (and arguably at this point the only) example of such an experiment was carried out by Google~\cite{google19}. The Google team constructed a device $D$ that provides a ``noisy simulation'' of a quantum circuit $C$ with $n$ inputs and $n$ outputs.
The device can be thought as a ``black box'' that samples from a distribution $p^D$ over $\{0,1\}^n$ that (loosely) approximates the distribution $q_C$ that corresponds to measuring $C$ applied to the all-zeroes string $0^n$.
The quality of the device was measured using a certain benchmark known as the \emph{Linear Cross-Entropy} benchmark (\textit{a.k.a.} Linear XEB).
The computational hardness assumption underlying the experiment is that no efficient classical algorithm can achieve a similar score.
In this paper we investigate this assumption, giving a new classical algorithm for ``spoofing'' this benchmark in certain regimes.
While our algorithm falls short of spoofing the benchmark in the parameter regime corresponding to the Google experiment, we do manage to achieve non-trivial results for deeper circuits than were known before. 
To our knowledge, this is the first algorithm that directly targets the linear XEB benchmark, without going through a full simulation of the underlying quantum circuit.
Thus our work can be viewed as evidence that obtaining non-trivial performance for this benchmark is \emph{not} equivalent to simulating quantum circuits.

The linear XEB benchmark is defined as follows. Let $C$ be an $n$-qubit quantum circuit and $q_C:\{0,1\}^n\rightarrow[0,1]$ be the pdf of the distribution obtained by measuring $C|0^n \rangle$. For each $x\in\{0,1\}^n$, the instance linear XEB of $x$ is defined as $\cF_C(x):=2^nq_C(x)-1$.
For every probability distribution $p$, the linear XEB fidelity of $p$ with respect to circuit $C$ is defined as
\[
\cF_{C}(p) := \Exp_{x\sim p}\left[\cF_C(x)\right] = 2^n\sum_{x\in\{0,1\}^n}q_C(x)p(x)-1 \, .
\]

If $C$ is a fully random circuit, then in expectation a perfect simulation $p=q_C$ achieves $\cF_C(p)=2$.\footnote{This follows since $q_C$ is the \emph{Porter Thomas distribution}.  However, $q_C$ is \emph{not} the maximizer of $\cF_C(p)$: a distribution $p$ that has all its mass on the mode $x$ of the distribution $q_C$ will achieve $\cF_C(p) \geq \Omega(n)$ for fully random circuits, and  even higher values for shallower circuits as we'll see below.}
Google's ``quantum computational supremacy'' experiment demonstrated a noisy simulator sampling from a distribution $p$ with $\cF_C(p) \approx (2.24\pm0.21)\times10^{-3}$  for two dimensional 53-qubit circuits of depth 20.
A trivial simulation (e.g. a distribution $p$ which is the uniform distribution or another distribution independent of $C$) will achieve $\cF_C(p)=0$.
Motivated by the above, we say that $p$ achieves \emph{non trivial fidelity} with respect to the circuit $C$ if $\cF_C(p) = 1/\poly(n)$.\footnote{As mentioned above, for an ideal simulation in random circuits the Fidelity will be a constant. For noisy quantum circuits such as Google's, the fidelity is roughly $\exp(-\epsilon s)$ where $\epsilon$ is the level of noise per gate  and $s = \Theta(d \cdot n)$ is the number of gates in the circuit.}

The computational assumption underlying quantum computational supremacy with respect to some distribution $\cD$ over quantum circuits can be defined as follows. 
For every efficient randomized classical algorithm $A$, with high probability over $C \sim \cD$ , if we let $A_C$ be the distribution of $A$'s output on input $C$, then $\cF_C(A_C) = n^{-\omega(1)}$. That is, the distribution output by $A(C)$  has trivial fidelity with respect to $C$.
Aaronson and Gunn~\cite{AG19} showed that this assumption follows from a (very strong) assumption they called ``Linear Cross-Entropy Quantum Threshold Assumption'' or XQUATH.\footnote{While \cite{AG19} state their result for $\Omega(1)$  fidelity,  their proof shows that the XQUATH assumption implies that classical algorithms can not achieve $1/\poly(n)$ empirical fidelity with $poly(n)$ samples.}

In this work, we present an efficient classical algorithm $A$ that  satisfies $\cF_C(A_C)=\Omega(1)$ for  quantum circuit $C$ sampled from a distribution with Haar random 2-qubit gates with small \emph{light cones} (see Definition~\ref{def:light cone}).\footnote{If $C$ is a quantum circuit and $i$ is an output bit of $C$, then the \emph{light cone} of $i$ is the set of all input bits $j$ that are connected to $i$ via a path in the circuit. For general circuits the light cone size can be exponential in the depth, but for one or two dimensional circuits, of the type used in quantum supremacy experiment, the light cone size is polynomial in the depth.}
Specifically, we prove the following theorem:

\begin{restatable}[Linear XEB for circuits with small light cones]{theorem}{maintheorem}\label{thm:general}
Let $n,d,L\in\N$ and let $\cD$ be a distribution over $n$-qubit quantum circuits with (i) light cone size at most $L$, (ii) depth at most $d$, and (iii) Haar random $2$-qubit gates. Then, there exists a classical randomized algorithm $A$ running in $\poly\left(n,2^{L}\right)$ time such that
\[
\Exp_{C\sim\cD}\left[\cF_C(A_C)\right]\geq\left(1+15^{-d}\right)^{\floor*{\frac{n}{L}}}-1 \, .
\]
\end{restatable}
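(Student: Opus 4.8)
The plan is to isolate $\lfloor n/L\rfloor$ output bits whose behaviour is independent and slightly biased, guess each of them optimally, and randomize the rest. Concretely, I would first select output bits $i_1,\dots,i_k$ with $k=\lfloor n/L\rfloor$ whose backward light cones (the sets of gates lying on a path to the output) are pairwise disjoint. Since two backward cones share a gate only if they share an input bit, and each cone touches at most $L$ input bits, the locality of the cones lets one pack at least $\lfloor n/L\rfloor$ of them (e.g. by spacing the chosen outputs so that their input neighbourhoods tile $[n]$). For each $i_j$ I would compute the $2\times2$ reduced density matrix of output qubit $i_j$ by contracting only the subcircuit in its backward cone, which acts on at most $L$ qubits and so costs $\poly(2^{L})$; let $q^{(j)}=(q_0^{(j)},q_1^{(j)})$ be the resulting single-bit marginal and $b_j\in\{0,1\}$ its mode. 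The algorithm $A$ then outputs the string equal to $b_j$ on each coordinate $i_j$ and uniformly random on the remaining $n-k$ coordinates, in total time $\poly(n,2^{L})$.

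\textbf{Reducing fidelity to a product.} Write $M=\{i_1,\dots,i_k\}$. The uniform fill on $[n]\setminus M$ makes $A_C$ the product of a point mass on $M$ with the uniform distribution elsewhere, so expanding $\cF_C(A_C)=2^{n}\Exp_{x\sim q_C}[p_A(x)\cdot 2^n]$ collapses the off-$M$ coordinates and yields $\cF_C(A_C)=2^{k}\,q_C^{(M)}(b_1\cdots b_k)-1$, where $q_C^{(M)}$ is the marginal of $q_C$ on $M$. Because the chosen backward cones are disjoint, the subcircuits producing the qubits of $M$ are independent, so $q_C^{(M)}$ factorizes as $\prod_j q^{(j)}$ and $q^{(j)}(b_j)=\max(q_0^{(j)},q_1^{(j)})=\tfrac12\bigl(1+|\langle Z_{i_j}\rangle|\bigr)$, giving $\cF_C(A_C)=\prod_{j=1}^{k}\bigl(1+|\langle Z_{i_j}\rangle|\bigr)-1$. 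The same disjointness makes the $\langle Z_{i_j}\rangle$ independent across $j$ (each depends only on the Haar gates in one cone), hence
\[
\Exp_{C\sim\cD}\bigl[\cF_C(A_C)\bigr]=\prod_{j=1}^{k}\Bigl(1+\Exp_{C}\bigl|\langle Z_{i_j}\rangle\bigr|\Bigr)-1 ,
\]
and it remains to show each factor is at least $1+15^{-d}$.

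\textbf{The single-qubit bias bound.} The crux is the claim $\Exp_{C}|\langle Z_i\rangle|\ge 15^{-d}$ for every output bit $i$. Since $|\langle Z_i\rangle|\le1$, I would pass to the second moment, $\Exp_C|\langle Z_i\rangle|\ge\Exp_C\langle Z_i\rangle^{2}$, which is tractable by a two-copy Haar average. Expanding $A:=C^{\dagger}Z_iC=\sum_P\hat A_P P$ in the Pauli basis gives $\langle Z_i\rangle=\sum_{P\in\{I,Z\}^{\otimes n}}\hat A_P$ (since $\langle0^n|P|0^n\rangle=1$ exactly when $P\in\{I,Z\}^{\otimes n}$), so $\Exp_C\langle Z_i\rangle^{2}=\sum_{P,P'\in\{I,Z\}^{\otimes n}}\Exp_C[\hat A_P\hat A_{P'}]$. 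The key structural fact is that the Haar twirl of each $2$-qubit gate sends $(U^{\dagger}PU)^{\otimes2}$ to $\tfrac{1}{15}\sum_{Q\ne I}Q\otimes Q$ for every nonidentity two-qubit Pauli $P$ (and fixes $I\otimes I$). Consequently the backward two-copy evolution of $Z_i^{\otimes2}$ stays supported on the diagonal set $\{Q\otimes Q\}$, the cross terms $\Exp_C[\hat A_P\hat A_{P'}]$ with $P\ne P'$ vanish, and $w_P:=\Exp_C[\hat A_P^{2}]$ is a genuine probability distribution over Paulis (it sums to $\Exp_C[2^{-n}\tr(A^{2})]=1$). Under $w$, each nonidentity Pauli spreads uniformly over the $15$ nonidentity two-qubit Paulis at each gate, so the trajectory in which the operator remains exactly $Z_i$ through all $\le d$ layers touching qubit $i$ has weight at least $15^{-d}$; thus $w_{Z_i}\ge15^{-d}$ and $\Exp_C\langle Z_i\rangle^{2}\ge w_{Z_i}\ge15^{-d}$, which finishes the theorem.

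\textbf{Main obstacle.} The heart of the argument is the second-moment computation: verifying by Weingarten calculus that the two-copy gate average is exactly the uniform diagonal spreading with weight $1/15$, that this keeps the whole evolution on the diagonal so the off-diagonal Pauli correlations drop out and $\{w_P\}$ is a bona fide distribution, and that restricting to the single ``stay-at-$i$'' trajectory—rather than trying to control the full and delicate Pauli spreading—already captures mass $15^{-d}$. The combinatorial packing of $\lfloor n/L\rfloor$ disjoint cones is comparatively routine, but I would note that it genuinely uses the \emph{local geometry} of the light cones and not merely the bound $|S_i|\le L$.
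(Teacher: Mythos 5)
Your proof is correct, and its analytic core coincides with the paper's: the single-qubit bound $\Exp_C\langle Z_i\rangle^2\ge 15^{-d}$ via the two-copy Haar twirl, under which each gate sends a non-identity two-qubit Pauli to the uniform mixture over the $15$ non-identity ones, so the diagonal weights $w_P$ form a Markov chain and the single ``stay at $Z$'' trajectory already carries mass $15^{-d}$. This is exactly \autoref{lem:single} and \autoref{lem:markov chain} of the paper, which phrases the computation as a forward tensor-network contraction in the Pauli basis rather than your Heisenberg-picture coefficient spreading; the paper's witness assignment ($\sigma=Z$ along the wire tracing back from qubit $i$, $I$ elsewhere, giving $V^{(t)}=M_{I,Z,I,Z}=1/15$ per layer) is precisely your trajectory, and your identity $q_0^2+q_1^2=\tfrac12(1+\langle Z\rangle^2)$ is the paper's \autoref{lem:sum of squares of marginal}. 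Where you genuinely depart is the algorithm: the paper \emph{samples} each selected bit from its computed marginal, which makes $\cF_C(A_C)=\prod_j\bigl(1+\langle Z_{i_j}\rangle^2\bigr)-1$ exactly the quantity controlled by the second moment, whereas you output the \emph{mode}, getting $\cF_C(A_C)=\prod_j\bigl(1+|\langle Z_{i_j}\rangle|\bigr)-1$, which is pointwise at least as large and reduces to the same bound via $\Exp|\langle Z\rangle|\ge\Exp\langle Z\rangle^2$. Your variant is never worse per instance; the trade-off is that the paper's choice yields the clean identity $A_C(x)=2^{m-n}q_C(I,x_I)$, which it later exploits in the probability-over-circuits and sample-complexity analyses. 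One caveat you share with the paper: the existence of $\floor*{n/L}$ outputs with pairwise disjoint light cones is asserted rather than proved, and it does not follow from the cardinality bound $|S_i|\le L$ alone for arbitrary skeletons (a single input can lie in many forward cones, defeating a naive greedy count); you flag this correctly, and like the paper you are implicitly relying on the geometric locality of the constant-dimensional architectures of interest.
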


For constant dimensional circuits (such as the 2D quantum architecture used by Google), Theorem~\ref{thm:general} yields the following corollary:

\begin{corollary}[Constant dimensional circuits]\label{cor:lowdim}
Let $n\in\N$ and $d= O(\log n)$. Let $c\in\N$ be a constant and $\cD$ be the distribution of $n$-qubit $c$-dimensional circuits of depth $d$ with Haar random $2$-qubit gates. There is a randomized algorithm $A$ running in time $2^{O(d^c)}$ such that
$$\mathbb{E}_{C\sim\cD} \cF_C(A_C) = 1/\poly(n) \;.$$
\end{corollary}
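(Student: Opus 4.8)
The plan is to obtain Corollary~\ref{cor:lowdim} as a direct instantiation of Theorem~\ref{thm:general}: the only input that is special to $c$-dimensional circuits is a bound on the light cone size $L$ in terms of $d$ and $c$, after which the running time and the fidelity both follow by substitution and an elementary asymptotic calculation. So the one substantive step is the light-cone estimate; the rest is bookkeeping.

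First I would bound the light cone size of a $c$-dimensional depth-$d$ circuit. The $n$ qubits sit on a $c$-dimensional grid, every $2$-qubit gate couples a pair of lattice-adjacent qubits, and within each of the $d$ layers the gates act on disjoint pairs. Tracing the light cone of a fixed output qubit $i$ backwards through the layers, each layer can enlarge the set of relevant coordinates by at most one lattice step along the axis of the corresponding gate, since only nearest neighbors are coupled. Hence each coordinate of an input qubit that can influence $i$ differs from the corresponding coordinate of $i$ by at most $d$, so the light cone is contained in an axis-aligned cube of side length $2d+1$ centered at $i$. This cube contains at most $(2d+1)^c = O(d^c)$ input qubits, giving $L = O(d^c)$ uniformly over the support of $\cD$.

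Next I would substitute $L = O(d^c)$ into Theorem~\ref{thm:general}. The running time $\poly(n,2^L)$ becomes $\poly(n)\cdot 2^{O(d^c)}$, which in the regime $d = \Theta(\log n)$ is dominated by the factor $2^{O(d^c)}$, matching the claimed bound. For the fidelity, Bernoulli's inequality gives
\[
\left(1+15^{-d}\right)^{\floor*{\frac{n}{L}}} - 1 \;\geq\; \floor*{\frac{n}{L}}\cdot 15^{-d} \, .
\]
Writing $d \le a\log_2 n$ for a constant $a$, we have $15^{-d} \ge n^{-a\log_2 15} = 1/\poly(n)$, while $L = O(d^c) = O(\log^c n)$ forces $\floor*{n/L} \ge n/\poly(\log n) = n^{1-o(1)}$. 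Multiplying, the fidelity is at least $n^{1-o(1)}\cdot n^{-O(1)} = 1/\poly(n)$, as required. (For smaller $d$ the bound only improves, since $15^{-d}$ grows and $\floor*{n/L}$ approaches linear in $n$.)

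The main obstacle is essentially already discharged by Theorem~\ref{thm:general}, and the corollary is an instantiation of it; the two facts one checks by hand are (a) that nearest-neighbor coupling on a $c$-dimensional grid confines the light cone to an $O(d^c)$-size cube, and (b) that the choice $d = O(\log n)$ keeps $15^{-d}$ only polynomially small while leaving $\floor*{n/L}$ close to linear, so their product stays $1/\poly(n)$. Neither step is delicate — the content lives in the theorem — so I expect the chief care to be in stating the light-cone geometry cleanly enough that the constant $c$ enters only as the exponent $(2d+1)^c$.
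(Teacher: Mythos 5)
Your proposal is correct and follows essentially the same route as the paper: bound the light cone of a $c$-dimensional depth-$d$ circuit by $L=O(d^c)$, plug into Theorem~\ref{thm:general}, and observe that $(1+15^{-d})^{\floor{n/L}}-1\geq \floor{n/L}\cdot 15^{-d} = n^{1-o(1)}\cdot n^{-O(1)} = 1/\poly(n)$ for $d=O(\log n)$. The only difference is that you spell out the geometric cube argument for the light-cone bound, which the paper simply asserts.
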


\begin{proof}
A $c$-dimensional of depth $d$ circuit has light-cone of size $L=O(d^c) = n^{o(1)}$ for $d=O(\log n)$.
Let $d = \alpha \log n$.
By plugging in the parameters of Theorem~\ref{thm:general}, we see that (using $\log_2 15 < 4$ and $n/L \geq n^{1-o(1)}$) the expected value of the fidelity is at least
$$
(1 + 2^{-4 d})^{n^{1-o(1)}} - 1  \geq \Omega(\tfrac{n^{1-o(1)}}{n^{4\alpha}}) \;.
$$
The right hand size is at least $1/\poly(n)$ for every constant $\alpha$ and in fact $\omega(1)$ for $\alpha < 1/4$.
\end{proof}

The bounds of Corollary~\ref{cor:lowdim} do \emph{not} correspond to the Google experiment where the depth is roughly comparable to $\sqrt{n}$, rather than logarithmic.
However, prior works in the literature were only able to achieve good linear XEB performance for circuits of \emph{constant} depth (see Section~\ref{sec:prior}).
More importantly (in our view) is that our bounds show that it may be possible to achieve good linear XEB performance without achieving a full simulation.

\subsection{From expectation to concentration.} 

In actual experiments, one measures the \emph{empirical} linear XEB, obtained by sampling $x_1,\ldots,x_T$ independently from the distribution $p$ and computing $\tfrac{1}{T} \sum_{i=1}^T 2^n q_C(x_i) - 1$.
Thus in our classical simulation we want to go beyond achieving large \emph{expected} linear XEB benchmark, to show that our algorithm $A$ actually achieves non-trivial empirical linear XEB with probability at least inverse polynomial over the choice of the circuit and with a number of samples $T$ that is at most polynomial in $n$.
These probability bounds are more challenging to prove, and at the moment our results are weaker than the optimal bounds one can hope for.

\paragraph{Probability over circuits.}
For bounding the probability over \emph{circuits} we show in Section~\ref{sec:probcircuits}, that in the setting of Theorem~\ref{thm:general} , for logarithmic depth circuits, we can obtain $1/\poly(n)$ fidelity with probability at least $1/\poly(n)$. 
We also obtain more general tradeoffs between the fidelity, probability, and depth, see \autoref{cor:generalprob}.
We conjecture that random circuits from the distributions we consider exhibit much better concentration, and fact that the  fidelity sharply concentrates around its expectation.

\paragraph{Sample complexity, or probability over the algorithms' randomness.}
Bounding the \emph{sample complexity} of our algorithm is a more difficult task then the expectation analysis because it requires higher moment information on $\cF_C(A_C)$.
We obtain only partial bounds in this setting, which we believe to be far from optimal.
In Section~\ref{sec:sample complexity} we show that an upper bound for the \textit{collision probability} of $q_C$ is sufficient to give an upper bound for the sample complexity of our algorithm.
Specifically, letting $CP(q) = \sum_{x \in \{0,1\}^n} q(x)^2$, we show that if $CP(q) = M \cdot 2^{-n}$ then the number of samples needed for the empirical linear XEB to achieve a value of at least $\epsilon$  is $M \cdot \exp(O(\epsilon  \cdot 15^d))$.
In particular, for logarithmic depth circuits we can get inverse-polynomial empirical fidelity using  $O(M)$ samples.
For \emph{random} quantum circuits, where $q_C$ is the Porter-Thomas distribution (with $q_C(x)$ drawn independently as the square of a mean zero variance $2^{-n}$ normal variable), it is known that $CP(q_C) = O(2^{-n})$, i.e., $M=O(1)$. 
For shallow circuits, of the type we study, we show in \autoref{lem:anti concentration of 1D} that $CP(q_C) = O(2^{-n})$ for  random \emph{one dimensional} circuits of depth at least $c \log n$ for some constant $c>0$, which shows that we can achieve for such circuits inverse polynomial empirical fidelity using a polynomial number of samples. 
While this is significantly more technically challenging to prove, we conjecture that the same  collision probability bound holds for \emph{two dimensional} circuits of depth $\Omega(\sqrt{\log n})$. This conjecture, if true, will imply that for such circuits we can achieve $1/\poly(n)$ empirical fidelity using a polynomial number of samples, and constant fidelity using a sub-exponential (e.g. $\exp{\exp{O(\sqrt{\log n})}}$) number of samples.

\subsection{Prior works} \label{sec:prior}

Prior classical algorithms mostly focused on the task of obtaining a full simulation (sampling from $C|0^n\rangle$  or from a distribution close to it in statistical distance).
We are not aware of any prior work that directly targeted the linear XEB measure and gave explicit bounds for the performance in this measure that are not implied by approximating the full distribution.

Napp et al~\cite{napp2019efficient} gave an algorithm to simulate random two-dimensional circuits of some small constant depth. They gave strong theoretical evidence that up to a certain constant depth, such circuits can be approximated by 1D circuits of small entanglement (i.e., ``area law'' as opposed to ``volume law''), which can be effectively simulated using Matrix Product States \cite{vidal2004efficient}.
However, \cite{napp2019efficient} also gave evidence that the system undergoes a \emph{phase transition} when the depth is more than some constant size (around $4$), at which case the entanglement grows according to a ``volume law'' and hence their methods cannot be used to simulate circuits of super-constant depth.

Another direction of approximating large  quantum circuits has  considered the effect of \emph{noise}. Some restricted classes of noisy quantum circuits were shown to be simulated by polynomial time classical algorithms in \cite{bremner2017achieving,yung2017can} (in contrast to their noiseless variant  \cite{bremner2016average}).
This was also extended to more general random circuits by  \cite{gao2018efficient}. 
Very recent work has given numerical results suggesting that states generated by noisy quantum circuits could be approximated by Matrix Product States or Operators under the state fidelity measure  \cite{zhou2020limits,noh2020efficient}.\footnote{\cite{zhou2020limits} briefly discusses the linear XEB measure as well, see Figure 7 there.}
Low degree Fourier expansions yield other candidates for approximating such quantum states \cite{gao2018efficient,bremner2017achieving}.

\section{Preliminaries}\label{sec:prelim}

In this section we introduce some of the notions we use for quantum circuits, and in particular distributions of random quantum circuits of fixed architecture, as well as tensor networks for analyzing quantum circuits.  We include this here since some of this notation, and in particular tensor networks, might be unfamiliar to theoretical computer science audience.
However, the reader can choose to skip this section and refer back to it as needed.
We also record some useful facts of tensor networks and quantum circuits in \autoref{app:ommitted}.

For $n\in\N$, an $n$-qubit quantum state $\ket{\psi}=\sum_{x\in\{0,1\}^n}\alpha_x\ket{x}$ is a unit vector in $\Complex^{2^n}$. We let $I,X,Y,Z$ denote the Pauli matrices where
\[
I=\begin{bmatrix}1&0\\0&1\end{bmatrix} ,\ 
X=\begin{bmatrix}0&1\\1&0\end{bmatrix} ,\ 
Y=\begin{bmatrix}0&i\\-i&0\end{bmatrix} ,\ \text{and }
Z=\begin{bmatrix}1&0\\0&-1\end{bmatrix} \, . 
\]

The following definition captures the notion of an ``architecture'' of a quantum circuit (see \autoref{fig:circuit skeleton} for an example):

\begin{definition}[Circuit skeleton and light cone] \label{def:skeleton}
Let $n\in2\N$ and $d\in\N$, an $n$-qubit depth $d$ \emph{circuit skeleton} $\cS$ is a directed acyclic graph with $d+2$ layers with the following structure. For convenience, we start the index of layers from $0$.
\begin{itemize}
\item The $0^\text{th}$ and the $(d+1)^\text{th}$ layer has $n$ nodes corresponding to the $n$ input and output qubits. Each node in the first layer has exactly one out-going edge to the next layer while each node the last layer has exactly one in-going edge from the previous layer.
\item Each of the other layers has exactly $n/2$ nodes and each node has exactly two in-going to the next layer and two out-going edges from the previous layer. Specifically, the first edge $i^\text{th}$ gate is indexed by $2i-1$ while the second edge is indexed by $2i$ for each $i=1,2,\dots,n/2$.
\item For each $i=1,2,\dots,n$, the $i^\text{th}$ input node connects to the $i^\text{th}$ edge of the second layer while the $i^\text{th}$ edge of the $(d+1)^\text{th}$ layer connects to the $i^\text{th}$ output node.
\end{itemize}
Note that with the above definition, a circuit skeleton $\mathcal{S}$ can be specified by $d+1$ many permutations $\pi^{(0)},\pi^{(1)},\dots,\pi^{(d)}\in S_n$. Namely, for each $t=0,1,2,\dots,d-1$ and $i=1,2,\dots,n$, the $i^\text{th}$ edge of the $t^\text{th}$ layer connects to the $\pi^{(t)}(i)^\text{th}$ edge of the $(t+1)^\text{th}$ layer.

For every circuit skeleton $G$, the \emph{light cone size} of $G$ is the maximum over all output qubits $i$ of the size of the set $\{ j : \text{$j$ is input qubit connected to $i$ in $G$} \}$.
\end{definition}

\begin{figure}[H]
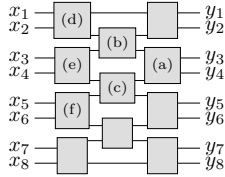

    \centering
    \scalebox{.8}{\tikzfig{tikz/skeleton}}
    \caption{An example of 1D circuit skeleton with $n=8$ and $d=3$. In this example the permutations are $\pi^{(0)}=\pi^{(4)}=\textsf{id},\ \pi^{(1)}=\pi^{(3)}=(18765432),\ \pi^{(2)}=(81234567)$.}
    \label{fig:circuit skeleton}
\end{figure}

Next, we define the light cone for an output qubit and the light cone size for a circuit skeleton.

\begin{definition}[Light cone]\label{def:light cone}
Let $\cS$ be a circuit skeleton and $i$ be an output qubit. The light cone of $i$ is the set of all input vertices in $\cS$ that has a path from left to right that ends at $i$. The light cone size of $\cS$ is then defined as the largest light cone size of an output qubit in $\cS$.
\end{definition}

Note that the light cone size of the 1D circuit in~\autoref{fig:circuit skeleton} is $6$, which is less than the number of qubits. Also, it turns out that computing the marginal of an output qubit only requires the information from the light cone.

\begin{lemma}[Marginal probability and light cone]\label{lem:marginal light cone}
Let $\cS$ be a circuit skeleton with light cone size $L$ and $C$ be a circuit using skeleton $\cS$. For each output qubit of $C$, the marginal probability can be computed in time $O(2^L)$.
\end{lemma}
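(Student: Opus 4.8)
The plan is to show that the single-qubit marginal of output $i$, namely $q_C(x_i=b) = \bra{0^n} C^\dagger \Pi_i^b C \ket{0^n}$ where $\Pi_i^b = \ket{b}\bra{b}$ acts on qubit $i$, depends only on the gates lying in the backward light cone of $i$, and that these gates can be simulated while only ever holding at most $L$ qubits at a time. Equivalently, writing $\rho_i = \mathrm{Tr}_{\neq i}\bigl(C\ket{0^n}\bra{0^n}C^\dagger\bigr)$, the goal is to compute the $2\times 2$ matrix $\rho_i$ using $2^{O(L)}$-size linear algebra and then read off $\bra{b}\rho_i\ket{b}$.

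The conceptual heart is a unitarity-cancellation argument. Call a gate $g$ of $C$ \emph{relevant} if there is a directed path from $g$ to output qubit $i$ in the skeleton (equivalently, $g$ lies in the light cone of $i$ in the sense of \autoref{def:light cone}), and \emph{irrelevant} otherwise. The key observation is that irrelevance propagates forward: if $g$ is irrelevant and $g \to g'$, then a path $g \to g' \to i$ would make $g$ relevant, so $g'$ is irrelevant too. This lets me delete irrelevant gates in reverse topological order. When an irrelevant gate $g$ is deleted, every gate after it has already been deleted, so in the network for $C^\dagger \Pi_i^b C$ both output legs of $g$ are contracted directly (through the implicit trace/sum over all outputs $\neq i$, since neither leg leads to $i$) against the matching legs of $\bar g$ in $C^\dagger$; unitarity $g^\dagger g = I$ then collapses the pair $g,\bar g$ into identity wires. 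Iterating over all irrelevant gates leaves the marginal unchanged but keeps only the relevant subcircuit.

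The quantitative heart is bounding the number of simultaneously active wires. For each layer $t$ let $A_t$ be the set of wires on a directed path to $i$. At the output layer $|A_{d+1}|=1$, and I would argue $|A_t|$ is non-decreasing as $t$ moves toward the input: each relevant gate, having at least one output wire on a path to $i$, has \emph{both} input wires on a path to $i$, so passing a gate backward never decreases the count. Hence $|A_t| \le |A_0| = (\text{light cone size of } i) \le L$ for every $t$. Read forward, the active set therefore only shrinks from $\le L$ down to $1$, and whenever a wire leaves the light cone it is an honest output qubit $j \neq i$ over which we are marginalizing, so it may simply be traced out. The algorithm then initializes the $\le L$ light-cone input qubits in $\ket{0}$, propagates a density matrix of dimension $\le 2^L \times 2^L$ through the relevant gates layer by layer, traces out each wire as it exits the light cone, and returns $\bra{b}\rho_i\ket{b}$. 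This touches $O(dL)$ gates, each costing $2^{O(L)}$, for total time $2^{O(L)}$, matching the claimed $O(2^L)$ up to the polynomial light-cone gate count.

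I expect the main obstacle to be purely one of setting up the cancellation cleanly: one must fix the reverse-topological deletion order and verify (using that irrelevance propagates forward) that at the moment each irrelevant gate is removed both of its output legs are genuinely contracted into a trace, so that $g^\dagger g = I$ applies. The accompanying monotonicity claim $|A_t|\le L$ is the step that actually converts the qualitative ``the marginal depends only on the light cone'' into the explicit $2^{O(L)}$ running time, and it is where the light-cone \emph{size} $L$ enters; both steps are elementary but need careful bookkeeping over the layered skeleton of Definition~\ref{def:skeleton}.
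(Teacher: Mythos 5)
Your proof is correct and takes essentially the same route as the paper's: restrict attention to the gates in the backward light cone of the chosen output qubit and simulate only those, propagating a state on at most $L$ wires, for a cost of $2^{O(L)}$ per gate. The paper's own proof is an informal sketch by example; your unitarity-cancellation step and the monotonicity bound $|A_t|\le L$ are precisely the two facts it leaves implicit, so your write-up is a more rigorous rendering of the same argument rather than a different one.
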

\begin{proof}[Proof of~\autoref{lem:marginal light cone}]
We use the circuit skeleton in~\autoref{fig:circuit skeleton} as an illustrating example. For an output qubit in $C$, to compute its marginal probability it suffices to compute the input state to the gate it connects to. For example, for output qubit $y_3$, it suffices to compute the input state to gate (a).

Similarly, to compute the input state of a gate, it suffices to compute the input states of the gate it connects to from the previous layer. Namely, to compute the input state of gate (a), it suffices to compute that of gate (b) and (c). If we continue this process inductively, the only input state needed to compute the marginal probability of an output bit is then the one lies in its light cone. In this example, to compute the marginal probability of $y_3$, it suffices to consider only $x_1,x_2,\dots,x_6$.

Finally, to compute the input states of all the intermediate gates, it suffices to perform $2^L\times2^L$ matrix vector multiplication because each intermediate state is of size at most $L$. While all the above operations can be done in $O(2^L)$ times, computing the marginal probability of an output qubits in $C$ only requires $O(2^L)$ time.
\end{proof}

Now, we are able to formally define random quantum circuits.

\begin{definition}[Random quantum circuits]
Let $n\in2\N$, $d\in\N$. A distribution $\cD$ of $n$-qubit depth $d$ random circuits consists of an $n$-qubit depth $d$ circuit skeleton $\cS$ and ensembles $\{\cE_{i,j}\}$ over $4\times4$ unitary matrices for each $i=1,\dots,d$ and $j=1,\dots,n/2$.

A random quantum circuit $C$ sampled from $\cD$ by sampling a $4\times4$ unitary matrix $U_{i}^{(t)}$ from $\cE_{i}^{(t)}$ and assigning $U_{i}^{(t)}$ to the $i^\text{th}$ node of the $t^\text{th}$ layer for each $t=1,\dots,d$ and $i=1,\dots,n/2$.

Specifically, if each $\cE_{i}^{(t)}$ is Haar random, then we say $\cD$ is Haar random $2$-qubit circuits over $\cS$.
\end{definition}

\subsection{Tensor networks}\label{sec:prelim tensor net}

Tensor network is an intuitive graphical language that can be rigorously used in reasoning about multilinear maps. Especially, it finds many applications in quantum computing since the basic operations such as partial measurement are all multilinear maps. In this paper, we restrict our attention to \emph{qubits} (as opposed to the general case of \emph{qudits}) and only to gates that act on two qubits.

\begin{figure}[H]
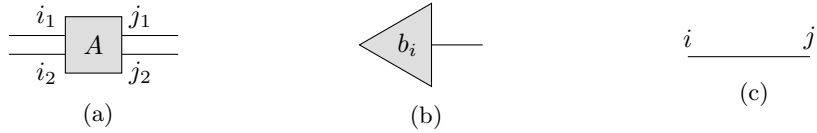

    \centering
    \begin{minipage}{0.2\linewidth}
            \centering
            \tikzfig{tikz/gate}
            \subcaption{}
            \label{fig:tensor net gate}
    \end{minipage}
    \hspace{0.05\linewidth}
    \begin{minipage}{0.2\linewidth}
            \centering
            \tikzfig{tikz/state}
            \subcaption{}
            \label{fig:tensor net state}
    \end{minipage}
    \hspace{0.05\linewidth}
    \begin{minipage}{0.2\linewidth}
            \centering
              \tikzfig{tikz/line}
              \subcaption{}
              \label{fig:tensor net line}
    \end{minipage}
    \caption{Three basic elements in tensor networks. (a) Gate: the figure represents $\sum_{b_i,b_j\in\{0,1\}}A_{b_i,b_j}\ket{b_i}\bra{b_j}$. (b) State: the figure represents $\bra{b_i}$. (c) Line: the figure represents $\delta_{b_i,b_j}$.}
    \label{fig:tensor net}
\end{figure}

In a tensor network, we represent a unitary matrix (\textit{e.g.,} a gate) as a box with lines on the sides (see~\autoref{fig:tensor net gate}). Each line represents a coordinate of the gate and in this paper each coordinate has dimension $2$ and is indexed by $\{0,1\}$. Specifically, a line on the left represents a column vector (\textit{i.e.,} $\ket{\cdot}$) while a line on the right  represents a row vector (\textit{i.e.,} $\bra{\cdot}$).\footnote{This is when the tensor network is written left to right - sometimes it is written top to bottom, in which case a line on the top represents a column vector and a line on the bottom represents a row vector.} For example,~\autoref{fig:tensor net gate} represents $\sum_{b_i,b_j\in\{0,1\}}A_{b_i,b_j}\ket{b_i}\bra{b_j}$.

Similarly, a state (\textit{e.g.,} a qubit)is represented by a triangle with line only on one side and it is a $\ket{\cdot}$ (resp. $\bra{\cdot}$) if the free-end of the line is left  (resp. right). For example,~\autoref{fig:tensor net state} represents $\bra{b_i}$.

Semantically, a pure line refers to an indicator function\footnote{Also known as \textit{contraction}.} for its two ends. For example, the line in~\autoref{fig:tensor net line} reads as $\sum_{b)i,b_j\in\{0,1\}}\delta_{b_i,b_j}\bra{b_i}\ket{b_j}$ where $\delta_{b_i,b_j}=1$ if $b_i=b_j$; otherwise it is $0$.

\section{Our Algorithm}

We now describe our classical algorithm that spoofs the linear cross-entropy benchmark in shallow quantum circuits.
The key idea is that rather than directly simulating the whole quantum circuit, our  algorithm only computes the marginal distributions of few output qubits and then samples substrings for those qubit accordingly. We sample the remaining subits uniformly at random. Intuitively, due to the correlation on those output qubits, one can expect that the linear cross-entropy of our algorithm could be better than uniform distribution, but  the analysis is somewhat delicate. Because consider shallow quantum circuits (of at most logarithmic light cone size), the marginal of few output qubits can be efficiently computed.

\begin{algorithm}[H] 
	\caption{Classical algorithm for spoofing linear XEB in shallow quantum circuits}\label{algo:main}
	\begin{algorithmic}[1]
	    \Input A quantum circuit $C$ sampled from $\cD$, a Haar random distribution over an $n$-qubit circuit skeleton $\cS$ with light cone size at most $L$.
	    \State We set $m \in \N$ to be a parameter in $\{1,\ldots,  \floor*{n/L} \}$. (We set $m=\floor*{n/L}$ to obtain the result of \autoref{thm:general} as stated.)
		\State Find $m$ output qubits $i_1,\dots,i_m$ such that their light cones are disjoint.
		\State Calculate the marginal probability of each output qubits $i_1,\dots,i_m$. 
		\State Sample $x_{i_1},\dots,x_{i_m}$ according to the marginal probabilities calculated in the previous step. For any $i\notin\{i_1,\dots,i_m\}$, sample $x_i$ uniformly random from $\{0,1\}$.
		\Output $x$.
	\end{algorithmic}
\end{algorithm}

\paragraph{Running time of the algorithm.}
The total running time of~\autoref{algo:main} is at most $\poly(n,2^L)$. 
Finding $m$ outputs with disjoint light cones takes  $\poly(n)$ time by a greedy algorithm.
The second step takes $\poly\left(n,2^L\right)$ time because it suffices to keep track of the $2^L\times 2^L$ density matrix recording the marginal probability of every qubit in the light cone of $i_j$ for each $j\in[m]$ (see \autoref{lem:marginal light cone}).
The final step of sampling uniform bits for the remaining outputs can be done in polynomial time.

\subsection{Analysis}

The following theorem implies  \autoref{thm:general} by setting $m=\floor*{n/L}$:

\begin{restatable}[Linear XEB for circuits with small light cones.]{theorem}{maintheoremrestate}\label{thm:generalrestate}
Let $n,d,L\in\N$ and let $\cD$ be a distribution over $n$-qubit quantum circuits with (i) light cone size at most $L$, (ii) depth at most $d$, and (iii) Haar random $2$-qubit gates. Then, letting $A_C$ be the distribution output by \autoref{algo:main} on input $C$, \[
\Exp_{C\sim\cD}\left[\cF_C(A_C)\right]\geq\left(1+15^{-d}\right)^m -1 \, \;,
\]
where $m$ is the parameter chosen in step 1 of the algorithm.
\end{restatable}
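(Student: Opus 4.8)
The plan is to first reduce the fidelity of the algorithm to a product over the $m$ chosen output qubits, exploiting that their light cones are disjoint, and then to reduce the whole statement to a single-qubit second-moment estimate.

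First I would write out $\cF_C(A_C)$ explicitly. Since the chosen qubits $i_1,\dots,i_m$ have disjoint light cones, the reduced state of $C\ket{0^n}$ on $S=\{i_1,\dots,i_m\}$ factorizes as a tensor product of single-qubit states, so the joint output marginal $q_S$ equals $\prod_j q_{i_j}$, where $q_{i_j}$ denotes the single-qubit marginal of output $i_j$. The distribution $A_C$ samples the bits of $S$ from $\prod_j q_{i_j}$ and the remaining bits uniformly, so after summing out the complement,
\[
\cF_C(A_C) = 2^m \sum_{x_S} q_S(x_S)\prod_j q_{i_j}(x_{i_j}) - 1 = 2^m\prod_{j}\Big(q_{i_j}(0)^2 + q_{i_j}(1)^2\Big) - 1 .
\]
Writing $q_{i_j}(b) = \tfrac12\big(1 + (-1)^b \langle Z_{i_j}\rangle\big)$, each factor equals $\tfrac12\big(1 + \langle Z_{i_j}\rangle^2\big)$, and hence
\[
\cF_C(A_C) = \prod_{j=1}^m\big(1 + \langle Z_{i_j}\rangle^2\big) - 1 ,
\]
where $\langle Z_{i_j}\rangle = \mathrm{Tr}\big(Z_{i_j}\, C\ket{0^n}\bra{0^n}C^\dagger\big)$ is the expectation of the Pauli $Z$ on the $j$-th chosen output qubit.

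Next I would take the expectation over $C\sim\cD$. The value $\langle Z_{i_j}\rangle$ depends only on the gates inside the light cone of $i_j$; since the chosen light cones are disjoint and the gates are independent and Haar random, the random variables $\langle Z_{i_1}\rangle,\dots,\langle Z_{i_m}\rangle$ are mutually independent. Therefore the expectation factorizes,
\[
\Exp_{C\sim\cD}\big[\cF_C(A_C)\big] = \prod_{j=1}^m\Big(1 + \Exp_C\big[\langle Z_{i_j}\rangle^2\big]\Big) - 1 ,
\]
and it suffices to prove the single-qubit estimate $\Exp_C[\langle Z_i\rangle^2]\ge 15^{-d}$ for any fixed output qubit $i$, after which the claimed bound $(1+15^{-d})^m - 1$ follows at once.

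The heart of the argument, and the step I expect to be the main obstacle, is this single-qubit second-moment bound. I would express $\langle Z_i\rangle^2$ on two independent copies (replicas) of the circuit, so that each Haar gate $U$ appears as $U\otimes U$ on the ket side and $U^\dagger\otimes U^\dagger$ on the bra side; averaging each gate independently replaces it by its second-moment (twirl) operator, which for a two-qubit ($q=4$) gate projects onto the span of the identity and swap operators on the doubled space, with the Weingarten denominator $q^2-1 = 15$ producing the factor of $15$. Concretely, tracking the Heisenberg-evolved Pauli weight of $Z_i$ back through the circuit, averaging a single gate sends a non-identity two-qubit Pauli to a uniform distribution over the $15$ non-identity Paulis with second-moment weight $1/15$ each (and annihilates the identity/trace component, consistent with the overlaps of $Z$ and the $\ket{0}$ inputs). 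This turns $\Exp_C[\langle Z_i\rangle^2]$ into a non-negative sum over backward ``Pauli histories'' on the light-cone graph, which I would lower-bound by the single history in which a non-identity Pauli propagates straight down one qubit line from the output to an input: each of the $d$ layers along this path contributes a factor $1/15$, giving $\Exp_C[\langle Z_i\rangle^2]\ge 15^{-d}$. Making this last step fully rigorous---correctly setting up the replica/tensor-network transfer operator, verifying the boundary contributions from the $\ket{0}$ inputs and the $Z$ output, and checking that the cross-terms preserve non-negativity so that a single configuration is a valid lower bound---is the delicate part of the proof.
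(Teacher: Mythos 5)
Your proposal is correct and follows essentially the same route as the paper: reduce $\cF_C(A_C)$ to $\prod_j\bigl(2(q_{i_j}(0)^2+q_{i_j}(1)^2)\bigr)-1$ via disjoint light cones, factor the expectation by independence of the gates in disjoint light cones, and prove the single-qubit bound $\Exp_C\bigl[\tr(Z_i\,U\ket{0^n}\bra{0^n}U^\dagger)^2\bigr]\ge 15^{-d}$ by twirling each Haar gate into a transition matrix on the Pauli basis and lower-bounding the resulting non-negative sum by the single configuration in which one non-identity Pauli propagates along a wire, picking up $1/15$ per layer. This is exactly the paper's Markov-chain/tensor-network argument (Lemma~\ref{lem:single}, Lemma~\ref{lem:markov chain}, and Theorem~\ref{thm:single}), including the verification that the $\ket{0}$ and $Z$ boundary factors multiply to $1$.
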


The proof of \autoref{thm:generalrestate} consists of three steps:

\begin{enumerate}
    \item  We reduce analyzing analyzing the expectation when the algorithms samples the marginals of $m$ output qubits into analyzing it for a single output qubit.
    
    \item We apply the integration formula for Haar measure and rewrite the expected linear XEB of a single qubit into a tensor network.
    
    \item We then perform a change of basis on the tensor network and turn the single qubit analysis into a Markov chain problem where the expected linear XEB of a single qubit can be easily lower bounded.
    
\end{enumerate}

Since the heart of the proof is the single output qubit analysis, we will describe it first.

\section{Single qubit analysis}\label{sec:single}

In this section, we prove the $m=1$ case of our algorithm.
That is, we prove that for a single  output qubit, the expected contribution to linear XEB is of the order of $15^{-d}$.

\begin{theorem}[Linear XEB of a single output qubit]\label{thm:single}
Let $n,d\in\N$ and $\cD$ be distribution over $n$-qubit quantum circuits with depth at most $d$ and with Haar random $2$-qubit gates. For $C\sim\cD$, let $U$ denote the unitary matrix computed by $C$. For each $i\in[n]$, we have
\[
\Exp_{C\sim\cD}\left[q_{C,i,0}^2+q_{C,i,1}^2\right]\geq\frac{1+15^{-d}}{2} \, ,
\]
where $q_{C,i,b} = \Pr_{x \sim q_C}[x_i =b]$.
\end{theorem}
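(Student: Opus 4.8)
The plan is to first reduce the sum of squared marginals to the second moment of a single Pauli-$Z$ expectation. Since $q_{C,i,0}+q_{C,i,1}=1$ and $q_{C,i,0}-q_{C,i,1}=\langle Z_i\rangle := \bra{0^n}U^\dagger Z_i U\ket{0^n}$, we have $q_{C,i,0}^2+q_{C,i,1}^2 = \tfrac{1+\langle Z_i\rangle^2}{2}$, so it suffices to prove $\Exp_{C\sim\cD}[\langle Z_i\rangle^2]\ge 15^{-d}$. I would then write $\langle Z_i\rangle^2 = \mathrm{tr}\big(\rho^{\otimes 2}(Z_i\otimes Z_i)\big)$ with $\rho=U\ket{0^n}\bra{0^n}U^\dagger$, and take the expectation gate by gate. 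Because the $2$-qubit gates are independent and Haar, $\Exp[\langle Z_i\rangle^2]$ factorizes into a tensor network in which every gate is replaced by its second-moment twirl $\Phi(X)=\Exp_U[U^{\otimes 2}\,X\,(U^\dagger)^{\otimes 2}]$. This is step 2 of the outline, and it is where the constant $15 = 4^2-1 = q^2-1$ enters, through the Weingarten function of $S_2$ at local dimension $q=4$.

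For step 3 I would use that the two-copy twirl of a $4$-dimensional unitary projects onto $\mathrm{span}\{I,S\}$ (identity and swap of the two copies), so each gate outputs $a\,(I\otimes I)+b\,(S\otimes S)$ on its two input-side wires: these wires are \emph{locked} to a common spin in $\{I,S\}$. This turns the network into a two-state statistical-mechanics model where every gate carries a spin $\sigma_g\in\{I,S\}$ and contributes a scalar weight determined by $\sigma_g$ and the two spins feeding it from the output side; the model is Markovian in the layer index. The boundary conditions are that the observable injects $Z^{\otimes 2}$ at output qubit $i$ (and $I$ elsewhere), while each input wire contracts with $\ket{0}\bra{0}^{\otimes 2}$, giving weight $1$ for either spin. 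Computing the $2\times 2$ transfer weights explicitly, every weight is nonnegative ($a(I,I)=1$, $b(I,I)=0$, $a(S,I)=b(S,I)=\tfrac25$, $b(Z^{\otimes2},I)=\tfrac4{15}$) except the single entry $a(Z^{\otimes 2},I)=-\tfrac1{15}$, and one checks that $\sigma_{G_i}=I$ forces the all-$I$ configuration (of weight exactly $-\tfrac1{15}$), since $\Phi(I\otimes I)=I\otimes I$ and qubit $i$ is the only source of the $S$ spin.

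Hence $\Exp[\langle Z_i\rangle^2] = -\tfrac1{15} + P_d$, where $P_d$ is the sum over all configurations with $\sigma_{G_i}=S$ and \emph{every} such term is nonnegative. To lower bound $P_d$ I would restrict to configurations whose $S$-cluster is a subtree of the backward light cone rooted at $G_i$, which is legitimate precisely because the discarded terms are nonnegative. In the worst case (full binary branching with all partner wires set to $I$) this gives $P_d = \tfrac{4}{15}\,f(d-1)^2$ for the recursion $f(k)=\tfrac25\big(1+f(k-1)^2\big)$ with $f(0)=1$. A short induction then yields $f(k)\ge \tfrac12(1+15^{-k})$, whence $P_d\ge \tfrac1{15}(1+15^{-(d-1)})^2\ge \tfrac1{15}\big(1+2\cdot15^{-(d-1)}\big)$ and therefore $\Exp[\langle Z_i\rangle^2]\ge 2\cdot 15^{-d}\ge 15^{-d}$, as required.

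The main obstacle is twofold, and both parts live in step 3. First is \emph{sign control}: the Weingarten weights are signed, so one cannot blindly discard terms; the structural fact that saves the argument is that the only negative contribution is the isolated all-$I$ configuration, after which the model is a genuinely nonnegative (branching) process. Second is \emph{geometry independence}: the theorem allows an arbitrary skeleton, so the light cone is a general DAG rather than a clean binary tree, and I must argue the full binary tree in an all-$I$ environment is the worst case. This should follow because fewer branches replace a factor $f(k-1)^2$ by the larger $f(k-1)$, and reconverging $(S,S)$ wires replace the weight $\tfrac25$ by the larger $1$; since $f\le 1$, both can only increase $P_d$. Making this ``restrict to a favorable subfamily of nonnegative configurations'' step fully rigorous for arbitrary architectures, rather than just for the binary tree, is the part I expect to require the most care.
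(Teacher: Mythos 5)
Your reduction to $\Exp[\langle Z_i\rangle^2]\ge 15^{-d}$ matches the paper's (via \autoref{lem:sum of squares of marginal}), and your local computations check out: the transfer weights $a(I,I)=1$, $b(I,I)=0$, $a(S,I)=b(S,I)=\tfrac25$, $a(Z^{\otimes 2},I)=-\tfrac1{15}$, $b(Z^{\otimes 2},I)=\tfrac4{15}$ are all correct, the observation that $\sigma_{G_i}=I$ forces the all-$I$ configuration of weight exactly $-\tfrac1{15}$ is right (a gate fed $(I,I)$ outputs $S$ with weight $0$), and the induction $f(k)\ge\tfrac12(1+15^{-k})$ goes through (one gets $f(k)\ge\tfrac12+3\cdot15^{-k}$). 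But you are working in the permutation basis $\{\mathbb{I},\mathbb{S}\}$, where the Weingarten weights are signed, and this is a genuinely different --- and substantially harder --- route than the paper's. The paper changes basis to the \emph{Pauli} basis (\autoref{lem:single}): there the per-gate transition matrix has entries only in $\{0,\tfrac1{15},1\}$ and the boundary factors are squares of traces, so \emph{every} term in the expansion of \autoref{lem:markov chain} is nonnegative. The lower bound then follows by exhibiting a \emph{single} configuration --- a lone $Z$ traced backward from output $i$ along one wire per layer --- of weight exactly $2^n\cdot 15^{-d}\cdot 2^{-n}=15^{-d}$. No cancellation, no recursion over the light cone, and no dependence on the circuit's geometry.

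The gap in your version is precisely the step you flagged, and it is not minor. Since $f(k)\to\tfrac12$ and $\tfrac4{15}\cdot\tfrac14=\tfrac1{15}$, the positive branching sum only \emph{barely} overcomes the $-\tfrac1{15}$ term: you must sum the entire tree subfamily, and the surviving margin is the exponentially small finite-depth correction. So the comparison between the true backward light cone (a DAG with reconvergent wires and possibly repeated gate pairs) and the full binary tree cannot be settled by the monotonicity heuristics you give. Reconvergence replaces a gate's two-branch contribution $\tfrac25\bigl(1+(\cdot)^2\bigr)$ by $1\cdot(\cdot)^2+0$, which deletes the termination branch, and it simultaneously changes the number of gates in the cluster; these substitutions alter the combinatorics, not just individual weights, and are not comparable term by term. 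Note also that the target inequality $P_d\ge\tfrac1{15}+15^{-d}$ is \emph{equivalent} to \autoref{thm:single} itself, so appealing to ``the tree is the worst case'' without an independent proof is circular. If you want to salvage the argument while keeping the statistical-mechanics picture, the cheapest fix is the paper's: insert the resolution of identity $\tfrac12\sum_{\sigma\in\{I,X,Y,Z\}}\sigma\otimes\sigma$ on each doubled wire to pass to the Pauli basis before discarding terms, after which dropping all but one configuration is unconditionally legitimate for any skeleton.
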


We prove~\autoref{thm:single} by reducing to a Markov chain problem using tensor networks.
Without loss of generality we can assume $i=1$. Also, by~\autoref{lem:sum of squares of marginal}, 
\[
q_{C,1,0}^2+q_{C,1,1}^2=\frac{1+\tr\left(Z\otimes I^{\otimes n-1}U^\dagger\ket{0^n}\bra{0^n}U\right)^2}{2}
\]
So our goal is to show that
\begin{equation}\label{eq:trace}
\Exp_{C\sim\cD}\left[\tr\left(Z\otimes I^{\otimes n-1}U^\dagger\ket{0^n}\bra{0^n}U\right)^2\right]\geq15^{-d} \ \ \ \  .
\end{equation}

Let us start with rewriting the trace term of~\autoref{eq:trace} into an equivalent tensor network expression as follows.

\[
\tr\left(Z\otimes I^{\otimes n-1}U^\dagger\ket{0^n}\bra{0^n}U\right)^2=\scalebox{.8}{\tikzfig{tikz/trace-circuit}} \, .
\]

Next, for a single gate $g$ in a quantum circuit, its expected behavior over the choice of $2$-qubit Haar random gates can be characterized in the following lemma.

\begin{lemma}\label{lem:single}
Let $U_g$ be Haar random $2$-qubit gate, then the following holds.
\[
\Exp_{U_g}\left[\scalebox{.8}{\tikzfig{tikz/transition-matrix-1}}\right]=\sum_{\sigma_1,\sigma_2,\sigma_1',\sigma_2'\in\{I,X,Y,Z\}}\scalebox{.8}{\tikzfig{tikz/transition-matrix}} \ \ \  .
\]
where
\[
M_{\sigma_1,\sigma_2,\sigma_1',\sigma_2'} =\begin{blockarray}{cccccc}
II & IX & IY & \cdots & ZZ \\
\begin{block}{(ccccc)c}
  1 & 0 & 0 & 0 & 0 & II \\
  0 & \frac{1}{15} & \frac{1}{15} & \cdots & \frac{1}{15} & IX \\
  0 & \frac{1}{15} & \frac{1}{15} & \cdots & \frac{1}{15} & IY \\
  0 & \vdots & \vdots & \ddots & \vdots & \vdots \\
  0 & \frac{1}{15} & \frac{1}{15} & \cdots & \frac{1}{15} & ZZ \\
\end{block}
\end{blockarray} \, .
 \]
\end{lemma}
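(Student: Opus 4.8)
The plan is to recognize the left-hand tensor network as the fourth-order Haar moment of a single gate and to evaluate it in the Pauli basis. Since the quantity in \eqref{eq:trace} is a \emph{squared} trace, each gate $U_g$ occurs four times in the network: twice as $U_g$ and twice as $U_g^\dagger$, i.e.\ two ``replicas'', one per copy of the trace. Thus the diagram $\tikzfig{tikz/transition-matrix-1}$, before contraction with its neighbors, is exactly the tensor $\Exp_{U_g}\bigl[U_g^{\otimes 2}\otimes (U_g^\dagger)^{\otimes 2}\bigr]$, which we view as the superoperator $\Exp_{U_g}[\Phi_{U_g}\otimes\Phi_{U_g}]$ acting on two copies of the $4$-dimensional operator space of the gate's two wires, where $\Phi_U(\cdot)=U(\cdot)U^\dagger$. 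The goal is to diagonalize this averaged superoperator in the Pauli basis and read off its matrix entries.

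First I would insert, on each of the two wires, a resolution of the identity in the two-qubit Pauli basis, using that $\{P/2 : P\in\{I,X,Y,Z\}^{\otimes 2}\}$ is orthonormal for operators on $\Complex^4$ under the Hilbert--Schmidt inner product (equivalently $O=\sum_P \tfrac{1}{4}\tr(PO)\,P$). This rewrites the single-gate network as a sum over an input two-qubit Pauli $P=\sigma_1\otimes\sigma_2$ and an output two-qubit Pauli $Q=\sigma_1'\otimes\sigma_2'$, with coefficient equal to the amplitude of $Q\otimes Q$ in $\Exp_{U_g}\bigl[U_g^{\otimes 2}(P\otimes P)(U_g^\dagger)^{\otimes 2}\bigr]$. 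The key structural point, which collapses the naive $256\times256$ transfer matrix down to the $16\times16$ matrix $M$, is that only ``diagonal'' Pauli configurations -- the same Pauli on both replicas of a given wire -- survive the Haar average.

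The heart of the computation is the second Haar moment $\Exp_{U_g}\bigl[U_g^{\otimes 2}(P_1\otimes P_2)(U_g^\dagger)^{\otimes 2}\bigr]$ for two-qubit Paulis $P_1,P_2$. By Schur--Weyl duality this expectation lies in $\mathrm{span}\{I\otimes I,\ \mathrm{SWAP}\}$, and solving the two linear equations obtained by pairing against $I\otimes I$ and against $\mathrm{SWAP}$ determines the coefficients from $\tr(P_1)\tr(P_2)$ and $\tr(P_1P_2)$; equivalently one applies the order-$2$ Weingarten formula with $d=4$, where $\mathrm{Wg}(\mathrm{id})=\tfrac{1}{d^2-1}=\tfrac{1}{15}$ and $\mathrm{Wg}((12))=-\tfrac{1}{d(d^2-1)}=-\tfrac{1}{60}$. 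Using $\tr(P_1P_2)=4\,\delta_{P_1,P_2}$ and $\tr(P)=4\,\delta_{P,I}$ I obtain that the expectation vanishes unless $P_1=P_2$; that for $P_1=P_2=I$ it equals $I\otimes I$; and that for $P_1=P_2=P\neq I$ it equals $\tfrac{1}{15}\sum_{Q\neq I} Q\otimes Q$ (the $-\tfrac{1}{15}I\otimes I$ term from $\mathrm{Wg}(\mathrm{id})$ exactly cancelling the $Q=I$ contribution of $\tfrac{4}{15}\mathrm{SWAP}$). Reading off the amplitude of $Q\otimes Q$ reproduces $M$ precisely: $M_{I,I}=1$, the remaining entries of the $I$-row and $I$-column vanish, and every entry indexed by a pair of non-identity Paulis equals $1/15$.

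Finally I would translate these coefficients back into the graphical language of $\tikzfig{tikz/transition-matrix}$, placing the surviving Pauli $\sigma_1,\sigma_2$ on the two input wires and $\sigma_1',\sigma_2'$ on the two output wires of the folded gate and attaching the scalar $M_{\sigma_1,\sigma_2,\sigma_1',\sigma_2'}$, which is the claimed identity. I expect the main obstacle to be bookkeeping rather than mathematics: correctly matching the four gate-instances to the two replica indices, tracking which lines carry the input Pauli $P$ versus the output Pauli $Q$, and keeping the Hilbert--Schmidt normalizations consistent so that the factor $1/15$ (rather than $1/60$ or $4/15$) is what appears. The vanishing of off-diagonal Pauli pairs is the conceptual crux enabling the $16\times16$ form, and it is the step most worth verifying carefully.
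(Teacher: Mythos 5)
Your proposal is correct and follows essentially the same route as the paper: both start from the order-$2$ Weingarten/integration formula with coefficients $\tfrac{1}{15}$ and $-\tfrac{1}{60}$ (the paper cites this as \cite[Eq.~2.4]{BB96}), change to the normalized Pauli basis, observe that only replica-diagonal Pauli configurations survive, and then read off the three cases of $M$. The only difference is presentational -- you phrase the middle step as a Schur--Weyl/Pauli-twirl computation while the paper contracts the delta-function tensor network explicitly -- and your case analysis (the value $1$ for $II\to II$, the vanishing mixed entries, and $\tfrac{1}{15}$ otherwise) matches the paper's entry-by-entry evaluation.
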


The proof of~\autoref{lem:single} is based on the integration formula~\cite{BB96} for Haar measure.
We postpone the proof of~\autoref{lem:single} to~\autoref{sec:single lemma}. Intuitively, the lemma says that by a change a basis, the expected behavior of a single Haar random $2$-qubit gate can be exactly understood by an explicit transition matrix $M$. By the linearity of taking expectation, we can apply~\autoref{lem:single} on every gates in the circuit $C$ and thus the whole tensor network is simplified to a \textit{Markov chain}. Concretely, we have the following lemma.

\begin{lemma}[Rewrite~\autoref{eq:trace} as a Markov chain]\label{lem:markov chain}
Let $n,d\in\N$ and $\cD$ be a Haar random distribution over an $n$-qubit depth $d$ circuit skeleton $\cS$ with permutations $\pi^{(0)},\pi^{(1)},\dots,\pi^{(d)}$. For $C\sim\cD$, let $U$ denote the unitary matrix computed by $C$.
\begin{equation}\label{eq:markov chain}
\Exp_{C\sim\cD}\left[\tr\left(Z\otimes I^{\otimes n-1}U^\dagger\ket{0^n}\bra{0^n}U\right)^2\right]=\ \sum_{\substack{\sigma_{i'}^{(t')}\in\{I,X,Y,Z\}\\i'=1,2,\dots,n\\t'=1,2,\dots,d+1}}\prod_{t=0}^{d+1} V^{(t)}\left(\left\{\sigma_{i'}^{(t')}\right\}\right)
\end{equation}
where
\[
V^{(t)}\left(\left\{\sigma_{i'}^{(t')}\right\}\right)=\left\{\begin{array}{ll}
\prod_{i=1}^n\tr\left(\frac{I+Z}{2}\frac{\sigma_{i}^{(1)}}{\sqrt{2}}\right)^2     & ,\ \text{if }t=0 \\
\prod_{i=1}^{n/2}M_{\sigma_{2i-1}^{(t)},\sigma_{2i}^{(t)},\sigma_{\pi^{(t)}(2i-1)}^{(t+1)},\sigma_{\pi^{(t)}(2i)}^{(t+1)}}     & ,\ \text{if }t=1,2,\dots,d\\
\tr\left(\frac{\sigma_{1}^{(d+1)}}{\sqrt{2}}Z\right)^2\cdot\prod_{i=2}^n\tr\left(\frac{\sigma_{i}^{(d+1)}}{\sqrt{2}}I\right)^2     & ,\ \text{if }t=d+1 \, .
\end{array} \right.
\]
\end{lemma}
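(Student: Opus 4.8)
The plan is to expand the squared trace on the left-hand side of \autoref{eq:markov chain} into a single large tensor network built out of four copies of the circuit $C$, apply the per-gate averaging identity of \autoref{lem:single} to each gate independently, and then read off the resulting contracted network as the claimed sum of products. Writing $M := Z \otimes I^{\otimes n-1}$, I would first observe that $\tr\!\left(M U^\dagger |0^n\rangle\langle 0^n| U\right) = \langle 0^n| U M U^\dagger |0^n\rangle$, so squaring produces exactly four copies of the circuit — two copies of $U$ and two of $U^\dagger$ — contracted against two copies of the input projector $|0^n\rangle\langle 0^n|$ and two copies of the observable $M$. This is precisely the network depicted in the ``trace-circuit'' figure: on every wire of the skeleton the four copies contribute four parallel legs, and each $2$-qubit gate of the skeleton appears four times (as $U_g, U_g^\dagger, U_g, U_g^\dagger$), forming the four-copy local tensor of the ``transition-matrix-1'' figure.

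Next, because the gates are drawn independently from the Haar measure, independence together with linearity of expectation lets me pull the expectation inside the contraction and factor it over gates, $\Exp_{C\sim\cD}[\cdots] = \prod_{\text{gates } g} \Exp_{U_g}[\text{local four-copy tensor of } g]$. Applying \autoref{lem:single} to each factor replaces the four legs on every internal edge by a single Pauli index $\sigma \in \{I,X,Y,Z\}$, via the change of basis implicit in the lemma, and replaces each gate by the transition matrix $M_{\sigma_1,\sigma_2,\sigma_1',\sigma_2'}$, whose row index pair $(\sigma_1,\sigma_2)$ is carried by the two incoming edges and whose column index pair $(\sigma_1',\sigma_2')$ is carried by the two outgoing edges. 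Labeling the edge entering position $i$ of layer $t$ by $\sigma_i^{(t)}$, the contraction of the whole network becomes a sum over all labelings $\{\sigma_i^{(t)}\}$, while the skeleton permutation $\pi^{(t)}$ is exactly what identifies the $j$-th outgoing edge of layer $t$ with the $\pi^{(t)}(j)$-th incoming edge of layer $t+1$; this produces the bulk layer factor $V^{(t)} = \prod_{i=1}^{n/2} M_{\sigma_{2i-1}^{(t)},\sigma_{2i}^{(t)},\sigma_{\pi^{(t)}(2i-1)}^{(t+1)},\sigma_{\pi^{(t)}(2i)}^{(t+1)}}$ for each $1 \le t \le d$.

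Finally I would treat the two boundaries separately. Using $|0\rangle\langle 0| = \tfrac{I+Z}{2}$ and decomposing the four-copy contribution of the input projector on each wire in the Pauli basis yields, for each qubit, two identical overlaps $\tr\!\left(\tfrac{I+Z}{2}\,\tfrac{\sigma_i^{(1)}}{\sqrt 2}\right)$ — one per trace copy — which gives the $t=0$ factor; symmetrically, the observable $M$ contributes $\tr\!\left(\tfrac{\sigma_1^{(d+1)}}{\sqrt 2} Z\right)^2$ on the first qubit and $\tr\!\left(\tfrac{\sigma_i^{(d+1)}}{\sqrt 2} I\right)^2$ on the remaining qubits, which is the $t=d+1$ factor. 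Assembling the boundary and bulk factors then gives $\prod_{t=0}^{d+1} V^{(t)}$, as claimed. The main obstacle here is bookkeeping rather than any genuine mathematical difficulty: one must track the four-copy leg structure consistently through the entire network, verify that the Pauli resolution on each shared edge agrees on both sides of every gate, and confirm that the permutations route the indices so that the local matrices $M$ compose into the stated Markov chain. Once \autoref{lem:single} is in hand, the factorization over independent gates and the two boundary decompositions are straightforward.
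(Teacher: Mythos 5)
Your proposal is correct and follows essentially the same route as the paper's proof: expand the squared trace into a four-copy tensor network, use independence of the gates and linearity of expectation to apply \autoref{lem:single} gate by gate, contract adjacent layers via the Pauli-basis orthogonality along the skeleton permutations, and evaluate the input-projector and observable boundaries to obtain the $t=0$ and $t=d+1$ factors. The only cosmetic difference is that the paper makes the inter-layer gluing explicit through intermediate labels $\widetilde{\sigma}_i^{(t)}$ and the identity $\tr\bigl(\tfrac{\widetilde{\sigma}_i^{(t)}}{\sqrt{2}}\tfrac{\sigma_{\pi^{(t)}(i)}^{(t+1)}}{\sqrt{2}}\bigr)=\delta_{\widetilde{\sigma}_i^{(t)},\sigma_{\pi^{(t)}(i)}^{(t+1)}}$, which is exactly the bookkeeping step you flag as needing verification.
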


The proof of~\autoref{lem:markov chain} is based on a careful composition of applying~\autoref{lem:single} on each of the gates. We postpone the proof of~\autoref{lem:markov chain} to~\autoref{sec:markov chain}. Now, we are ready to prove~\autoref{thm:single} and complete the analysis for the expected linear XEB of single output qubit.

\begin{proof}[Proof of~\autoref{thm:single}]
\autoref{lem:markov chain} rewrites the desiring quantity into the form of a Markov chain so now it suffices to show that the right hand side of~\autoref{eq:markov chain} is at least $15^{-d}$.

Notice that for every possible assignment to $\{\sigma_{i'}^{(t')}\}$, $V(\{\sigma_{i'}^{(t')}\})\geq0$. That is, it suffices to find an assignment such that $\prod_{t=0}^{d+1}V^{(t)}(\{\sigma_{i'}^{(t')}\})\geq15^{-d}$. Specifically, let us consider the following assignment. For all $i=1,2,\dots,n$ and $t=1,2,\dots,d+1$, let
\[
\sigma_i^{(t)} = \left\{\begin{array}{ll}
Z     & ,\ \text{if } \pi^{(d)}\circ\pi^{(d-1)}\circ\cdots\circ\pi^{(t)}(i)=1 \\
I     & ,\ \text{else} \, .
\end{array} \right.
\]
To analyze this assignment, let us start with the last layer. There we have $\sigma_1^{(d+1)}=Z$ and $\sigma_i^{(d+1)}=I$ for each $i=2,3,\dots,n$ and thus
\begin{align*}
V^{(d+1)}(\{\sigma_{i'}^{(t')}\})&=\tr\left(\frac{\sigma_{1}^{(d+1)}}{\sqrt{2}}Z\right)^2\cdot\prod_{i=2}^n\tr\left(\frac{\sigma_{i}^{(d+1)}}{\sqrt{2}}I\right)^2\\
&=\tr\left(\frac{ZZ}{\sqrt{2}}\right)^2\cdot\tr\left(\frac{II}{\sqrt{2}}\right)^{2(n-1)}=2^n \, .
\end{align*}
Next, for each $t=1,2,\dots,d$ and $i=1,2,\dots,n$, observe that $\sigma_i^{(t)}=\sigma_{\pi^{(t+1)}(i)}^{(t)}$ due to the choice of the assignment. As a result, all the $M_{\sigma_{2i-1}^{(t)},\sigma_{2i}^{(t)},\sigma_{\pi^{(t)}(2i-1)}^{(t+1)},\sigma_{\pi^{(t)}(2i)}^{(t+1)}}$ will be either $M_{I,I,I,I}$ or $M_{I,Z,I,Z}$. Specifically, for each $t=1,2,\dots,d$, since there is exactly one $Z$ appears among $\{\sigma_i^{(t)}\}_{i=1,\dots,n}$ while the rest are $I$s, there is also exactly one $M_{I,Z,I,Z}$ term contributes in $V^{(t)}(\{\sigma_{i'}^{(t')}\})$ while the other terms are $M_{I,I,I,I}$. Namely, we have
\[
V^{(t)}(\{\sigma_{i'}^{(t')}\}) = M_{I,Z,I,Z}\cdot (M_{I,I,I,I})^{n/2-1} = \frac{1}{15}
\]
for each $t=1,2,\dots,d$.

Finally, since there is exactly one $Z$ appears in $\{\sigma_i^{(0)}\}_{i=1,\dots,n}$ while the rest are $I$s, we have
\begin{align*}
V^{(0)}(\{\sigma_{i'}^{(t')}\}) &=\prod_{i=1}^n\tr\left(\frac{I+Z}{2}\frac{\sigma_{i}^{(0)}}{\sqrt{2}}\right)^2= \tr\left(\frac{I+Z}{2}\frac{Z}{\sqrt{2}}\right)^2\cdot\tr\left(\frac{I+Z}{2}\frac{I}{\sqrt{2}}\right)^{2(n-1)}\\
&=\left(\frac{1}{\sqrt{2}}\right)\cdot\left(\frac{1}{\sqrt{2}}\right)^{2(n-1)}=\frac{1}{2^n} \, .
\end{align*}
To sum up, we conclude that $V(\{\sigma_{i'}^{(t')}\})=\prod_{t=0}^{d+1}V^{(t)}(\{\sigma_{i'}^{(t')}\})=15^{-d}$ as desired. Specifically, this implies~\autoref{eq:trace}, \textit{i.e.,} $\Exp_{C\sim\cD}\left[\tr\left(Z\otimes I^{\otimes n-1}U^\dagger\ket{0^n}\bra{0^n}U\right)^2\right]\geq15^{-d}$. Combine with~\autoref{lem:sum of squares of marginal}, this completes the proof of~\autoref{thm:single}.
\end{proof}

\subsection{Proof of Lemma~\ref{lem:single}}\label{sec:single lemma}
We start with applying the integration formula for Haar random matrix and considering its tensor netowrok representation.
\begin{lemma}[{\cite[Equation~2.4]{BB96}}]
Let $U$ be a Haar random $2$-qubit gate, then we have the following. For each $x_a,x_b,x_c,x_d,y_a,y_b,y_c,y_d\in\{0,1\}^2$,
\begin{align*}
    \Exp_U\left[U_{x_ay_a}U^\dagger_{x_by_b}U_{x_cy_c}U^\dagger_{x_dy_d}\right] &= \frac{1}{15}\cdot\Big[\delta_{x_ax_b}\delta_{x_cx_d}\delta_{y_ay_b}\delta_{y_cy_d}+\delta_{x_ax_d}\delta_{x_bx_c}\delta_{y_ay_d}\delta_{y_by_c}\Big]\\
    &-\frac{1}{60}\cdot\Big[\delta_{x_ax_b}\delta_{x_cx_d}\delta_{y_ay_d}\delta_{y_by_c}+\delta_{x_ax_d}\delta_{x_bx_c}\delta_{y_ay_b}\delta_{y_cy_d}\Big] \, .
\end{align*}
The above equation can be represented as the following tensor network.
\[
\Exp_{U_g}\left[\scalebox{.8}{\tikzfig{tikz/transition-matrix-1}}\right]=\frac{1}{15}\cdot\left[\scalebox{.8}{\tikzfig{tikz/single-exp-1}}\right]-\frac{1}{60}\cdot\left[\scalebox{.8}{\tikzfig{tikz/single-exp-2}}\right] \ \ .
\]
\end{lemma}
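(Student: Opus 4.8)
The statement is the standard second-moment (Weingarten) formula for the unitary group $U(4)$, so the plan is to prove it via the representation-theoretic ``twirl'' argument rather than by a brute-force integration. The first step is recognition: the claimed right-hand side is exactly $\sum_{\sigma,\tau\in S_2}\mathrm{Wg}(\sigma^{-1}\tau,4)$ times a product of Kronecker deltas, where the four $(\sigma,\tau)$ pairs index the four displayed terms. The two diagonal pairs $\sigma=\tau$ carry the weight $\mathrm{Wg}(e,4)=\tfrac1{15}$, and the two off-diagonal pairs carry $\mathrm{Wg}((12),4)=-\tfrac1{60}$. Adopting the diagram convention $U^\dagger_{x_b y_b}=\overline{U_{x_b y_b}}$, one sees that $\sigma$ controls how the row indices $x_a,x_c$ of the two $U$-factors pair with the row indices $x_b,x_d$ of the two $\overline U$-factors, while $\tau$ controls the analogous pairing of the column indices $y$. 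Thus it suffices to establish the corresponding operator identity and read off the four terms.

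I would recast the scalar claim as a statement about the twirl $T(X):=\Exp_U\bigl[(U\otimes U)\,X\,(U^\dagger\otimes U^\dagger)\bigr]$ acting on operators $X$ on $\Complex^4\otimes\Complex^4$ (here $d=4$ and each tensor factor is the two-qubit space). By left-invariance of Haar measure, substituting $U\mapsto VU$ shows $(V\otimes V)\,T(X)\,(V\otimes V)^\dagger=T(X)$ for every $V\in U(4)$, so $T(X)$ lies in the commutant of $\{V\otimes V\}$. By Schur--Weyl duality for two copies, this commutant is $\mathrm{span}\{I,S\}$, where $S$ is the swap operator. Hence $T(X)=aI+bS$ for scalars $a,b$ depending linearly on $X$.

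Next I would pin down $a,b$ by taking Hilbert--Schmidt inner products against $I$ and $S$. Trace-preservation gives $\tr(T(X))=ad^2+bd=\tr(X)$, and since $S$ commutes with every $V\otimes V$ we get $\tr(S\,T(X))=ad+bd^2=\tr(SX)$, using $\tr(I)=d^2$, $\tr(S)=d$, $\tr(S^2)=d^2$. Solving this $2\times2$ system, whose determinant is $d^2(d^2-1)$, yields $a=\tfrac{1}{d^2-1}\tr(X)-\tfrac1{d(d^2-1)}\tr(SX)$ and $b=\tfrac{1}{d^2-1}\tr(SX)-\tfrac1{d(d^2-1)}\tr(X)$; at $d=4$ the coefficients become the Weingarten values $\tfrac1{15}$ and $-\tfrac1{60}$. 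Finally I would translate back to indices: writing $\tr(X)$ and $\tr(SX)$ as the two column-pairings ($\tau=e$ and $\tau=(12)$) and $I,S$ as the two row-pairings ($\sigma=e$ and $\sigma=(12)$), the identity $T(X)=aI+bS$ expands into precisely the four delta-products with coefficients $\mathrm{Wg}(\sigma^{-1}\tau,4)$, matching the displayed formula.

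The main obstacle, and the only genuinely nontrivial ingredient, is the identification of the commutant of $\{V\otimes V:V\in U(4)\}$ with $\mathrm{span}\{I,S\}$ (the $n=2$ case of Schur--Weyl duality); everything after that is linear bookkeeping. A secondary point requiring care is the index convention for $U^\dagger$: choosing the physics ``diagram'' convention $U^\dagger_{xy}=\overline{U_{xy}}$ rather than $\overline{U_{yx}}$ is exactly what makes the row/column pairings come out as the symmetric delta-products $\delta_{x_ax_b}\delta_{y_ay_b}\cdots$ in the lemma; keeping this consistent is where a careless sign or transposition would creep in.
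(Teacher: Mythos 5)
Your proof is correct. Note, however, that the paper does not actually prove this lemma: it is imported verbatim as a citation to \cite[Equation~2.4]{BB96} (the Brouwer--Beenakker diagrammatic integration formula), and the paper's own work only begins afterwards, when this formula is converted into the Pauli-basis transition matrix $M$ of Lemma~\ref{lem:single}. What you have supplied is a self-contained derivation of the cited input via the standard twirl argument: invariance of Haar measure places $T(X)=\Exp_U[(U\otimes U)X(U^\dagger\otimes U^\dagger)]$ in the commutant of $\{V\otimes V\}$, Schur--Weyl duality for two copies identifies that commutant with $\mathrm{span}\{I,S\}$, and the $2\times 2$ linear system from pairing against $I$ and $S$ produces exactly the Weingarten weights $\tfrac{1}{d^2-1}=\tfrac{1}{15}$ and $-\tfrac{1}{d(d^2-1)}=-\tfrac{1}{60}$ at $d=4$; expanding $aI+bS$ in matrix units then reproduces the four delta-products with the stated coefficients. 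I verified the linear algebra and the index bookkeeping; your remark about the convention $U^\dagger_{xy}=\overline{U_{xy}}$ is exactly the right point of care, since with the transposed convention the $x$ and $y$ labels on the daggered factors would be interchanged relative to the displayed formula. The only ingredient you invoke without proof is the $n=2$ case of Schur--Weyl duality (equivalently, that $\{I,S\}$ spans the commutant), which is standard and appropriate to cite; everything else is complete. Your route buys a proof from first principles and makes transparent where the numbers $15$ and $60$ come from, at the cost of invoking representation theory that the paper sidesteps entirely by outsourcing the formula to \cite{BB96}.
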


Next, the idea is to apply the Pauli identity (\textit{i.e.,}~\autoref{eq:pauli identity}) on each pair of $(1_a,1_b)$,$(1_c,1_d)$, $(2_a,2_b)$, $(2_c,2_d)$, $(1_a',1_b')$, $(1_c',1_d')$, $(2_a',2_b')$, and $(2_c',2_d')$. Intuitively, this is doing a change of basis from the standard basis to Pauli basis.

Let us first apply the Pauli identity on $(1_a,1_b)$ and $(1_c,1_d)$ note that by~\autoref{lem:pauli tensor net}, we have

\begin{align}
\scalebox{.8}{\tikzfig{tikz/pauli-one-circle}}&=\left\{\begin{array}{ll}
2\ \scalebox{.7}{\tikzfig{tikz/pauli-one-gate}}     & ,\ \sigma=\sigma'  \\
0     & ,\ \text{else}
\end{array}
\right.\label{eq:tensor net pauli one circule}
\intertext{and}
\scalebox{.8}{\tikzfig{tikz/pauli-two-circles}}&=\left\{\begin{array}{ll}
4\ \scalebox{.7}{\tikzfig{tikz/pauli-one-gate}}     & ,\ \sigma=\sigma'=I  \\
0     & ,\ \text{else.}
\end{array}
\right.\label{eq:tensor net pauli two circles}
\end{align}

That is, the tensor network is non-zero only if $\sigma=\sigma'$. Thus, we only need one variable $\sigma_1\in\{I,X,Y,Z\}$ to handle $(1_a,1_b)$ and $(1_c,1_d)$. Similarly, we can use $\sigma_2,\widetilde{\sigma_{1}},\widetilde{\sigma_{2}}\in\{I,X,Y,Z\}$ to handle other pairs respectively. The equation becomes the following.
\[
\Exp_{U_g}\left[\scalebox{.5}{\tikzfig{tikz/transition-matrix-1}}\right]=\frac{1}{2^8}\sum_{\substack{\sigma_1',\sigma_2',\sigma_1'',\sigma_2''\\\in\{I,X,Y,Z\}}}\scalebox{.5}{\tikzfig{tikz/single-exp-3}}\left(\frac{1}{15}\cdot\left[\scalebox{.5}{\tikzfig{tikz/single-exp-5}}\right]-\frac{1}{60}\cdot\left[\scalebox{.5}{\tikzfig{tikz/single-exp-6}}\right]\right)\scalebox{.5}{\tikzfig{tikz/single-exp-7}} \ .
\]

To finish the proof of~\autoref{lem:single}, we have to explicitly calculate the value of the tensor network for each choice of $\sigma_1,\sigma_2,\sigma_{1'},\sigma_{2'}\in\{I,X,Y,Z\}$. Again, by~\autoref{eq:tensor net pauli one circule} and~\autoref{eq:tensor net pauli two circles}, we have the following observations.
\begin{itemize}
\item If $\sigma_1=\sigma_2=\sigma_{1'}=\sigma_{2'}=I$, then the value is $2^{-8}\cdot\left(15^{-1}\cdot(2^8+2^4)-60^{-1}\cdot(2^6+2^6)\right)=2^{-4}$.
\item If $\sigma_1=\sigma_2=I$ and at least one of $\sigma_{1'},\sigma_{2'}$ is not $I$, or at least one of $\sigma_1,\sigma_2$ is not $I$ and $\sigma_{1'}=\sigma_{2'}=I$, then the value is $2^{-4}\cdot(15^{-1}\cdot(0+2^4)+60^{-1}\cdot(2^6+0))=0$.
\item For all the other cases, the value is $2^{-4}\cdot(15^{-1}\cdot(0+2^4)+60^{-1}\cdot(0+0))=2^{-4}\cdot15^{-1}$.
\end{itemize}
Finally, we take out the $2^{-4}$ and evenly distribute it to the Pauli gates outside. Namely, each of them gets an extra $1/\sqrt{2}$ factor as shown in the equation. This completes the proof of~\autoref{lem:single}.

\subsection{Proof of Lemma~\ref{lem:markov chain}}\label{sec:markov chain}

Let us do a change of basis from the standard basis to the Pauli basis. Concretely, we apply~\autoref{lem:single} on every gate. Note that by the independence of each gate and the linearity of expectation, the $t^\text{th}$ layer of the circuit becomes the following for each $t=1,2,\dots,d$.

\[
\Exp_{U_t}\scalebox{.8}{\tikzfig{tikz/markov-layer-1}}=\sum_{\substack{\sigma_i^{(t)},\widetilde{\sigma}_i^{(t)}\in\{I,X,Y,Z\}\\\forall i=1,2,\dots,n}}\scalebox{.6}{\tikzfig{tikz/markov-layer-2}}\prod_{i=1}^{n/2}M_{\sigma_{2i-1}^{(t)},\sigma_{2i}^{(t)},\widetilde{\sigma}_{2i-1}^{(t)},\widetilde{\sigma}_{2i}^{(t)}}\scalebox{.6}{\tikzfig{tikz/markov-layer-3}} \ .
\]

Next, the $i^\text{th}$ output wire at the $t^\text{th}$ layer, \textit{i.e.,} the wires indexed by $\widetilde{i_a},\widetilde{i_b},\widetilde{i_c},\widetilde{i_d}$, connects to the $(\pi^{(t)}(i))^\text{th}$ input wire at the $(t+1)^\text{th}$ later, \textit{i.e.,} the wires indexed by $\pi^{(t)}(i)_a,\pi^{(t)}(i)_b,\pi^{(t)}(i)_c,\pi^{(t)}(i)_d$. By the orthogonality of Pauli gates (\textit{i.e.,}~\autoref{lem:pauli tensor net}), we have $\tr(\frac{\widetilde{\sigma}_i^{(t)}}{\sqrt{2}}\frac{\sigma_{\pi^{(t)}(i)}^{(t+1)}}{\sqrt{2}})=\delta_{\widetilde{\sigma}_i^{(t)},\sigma_{\pi^{(t)}(i)}^{(t+1)}}$ for all $i=1,2,\dots,n$. To sum up, the $1^\text{st}$ to $d^\text{th}$ layer is equivalent to following.

\[
\Exp_{U_1,U_2,\dots,U_d}\left[\scalebox{.5}{\tikzfig{tikz/exp-circuit}}\right] = \sum_{\substack{\sigma_i^{(t)}\in\{I,X,Y,Z\}\\\forall i=1,2,\dots,n\\t=1,2,\dots,t+1}}\scalebox{.45}{\tikzfig{tikz/exp-circuit-1}}\prod_{t=1}^d\left(\prod_{i=1}^{n/2}M_{\sigma_{2i-1}^{(t)},\sigma_{2i}^{(t)},\sigma_{\pi^{(t)}(2i-1)}^{(t+1)},\sigma_{\pi^{(t)}(2i)}^{(t+1)}}\right)\scalebox{.45}{\tikzfig{tikz/exp-circuit-2}} \ .
\]

Finally, let us plug in the input and output layer. Recall that the input layer contains 4 copies of $\ket{0^{n}}$ and the output layer contains 2 copies of $Z\otimes I^{\otimes n-1}$. Concretely, the contribution from the input layer would be
\[
\prod_{i=1}^n\left(\bra{0}\frac{\sigma_1^{(1)}}{\sqrt{2}}\ket{1}\right)^2=\prod_{i=1}^n\tr\left(\frac{I+Z}{2}\frac{\sigma_i^{(1)}}{\sqrt{2}}\right)^2
\]
while the contribution from the output layer would be
\[
\tr\left(\frac{\sigma_1^{(d+1)}}{\sqrt{2}}Z\right)^2\cdot\prod_{i=2}^n\tr\left(\frac{\sigma_i^{(d+1)}}{\sqrt{2}}I\right)^2 \, .
\]

This completes the proof of~\autoref{lem:markov chain}.

\section{Wrapping up: from single output bit to many bits}\label{sec:exp}

In this section we complete the proof of \autoref{thm:general} .

We will use the following notation. Let $q(x)$ be a pdf over $x\in\{0,1\}^n$. For any $I\subset[n]$ and $x_I\in\{0,1\}^I$, let $q(I,x_I)$ denote the marginal probability of the output qubit at location $I$ being $x_I$. Formally, $q(I,x_I) = \sum_{\substack{y\in\{0,1\}^n\\y_I=x_I}}q(y)$. 
For a fixed input $C$, let $I=\{i_1,\dots,i_m\}$ be the output qubits selected by~\autoref{algo:main}. Note that~\autoref{algo:main} will choose the same $I$ for each $C$ sampled from $\cD$. By the design of~\autoref{algo:main}, $A_C(x) = \frac{1}{2^{n-m}}q_C(I,x_I)$
for every $x\in\{0,1\}^n$. Thus, the linear XEB of $A_C$ is the following.
\begin{align}
\cF_C(A_C)&=2^n\sum_{x\in\{0,1\}^n}q_C(x)A_C(x)-1=2^n\sum_{x\in\{0,1\}^n}q_C(x)\frac{q_C(I,x_I)}{2^{n-m}}-1\\
&=2^m\sum_{x_I\in\{0,1\}^I}q_C(I,x_I)^2-1 \, . \nonumber
\intertext{Note that because their light cones are disjoint, by~\autoref{lem:disjoint light cone}, we have $q_C(I,x_I) = \prod_{j=1}^mq_C\left(\{i_j\},x_{i_j}\right)$. Thus, the equation becomes}
&=\sum_{x_I\in\{0,1\}^I}\prod_{j=1}^m2q_C(\{x_{i_j}\},x_{i_j})^2-1 \, . \label{eq:XEB pc}
\end{align}

Now, let us take expectation on the linear XEB over $\cD$. Since $\cD$ fixes the structure of the circuit and the randomness only lies in the choice of gates, $q_C\left(\{i_j\},x_{i_j}\right)$ is independent to each other. Namely,
\begin{align}
\Exp_{C\sim\cD}\left[\cF_C(A_C)\right]&=\sum_{x_I\in\{0,1\}^I}\prod_{j=1}^m2\Exp_{C\sim\cD}\left[q_C(\{x_{i_j}\},x_{i_j})^2\right]-1\nonumber\\
&=\prod_{j=1}^m2\Exp_{C\sim\cD}\left[q_C(\{x_{i_j}\},0)^2+q_C(\{x_{i_j}\},1)^2\right]-1 \, .\label{eq:exp XEB}
\end{align}

Using the single qubit analysis (\autoref{thm:single}), we can complete the proof of~\autoref{thm:general} as follows.

\begin{proof}[Proof of~\autoref{thm:general}]
Apply~\autoref{thm:single} on~\autoref{eq:exp XEB}, we have
\[
\Exp_{C\sim\cD}\left[\cF_C(A_C)\right]\geq\left(1+15^{-d}\right)^m-1
\]
as desired.
\end{proof}

\subsection{Probability over circuits} \label{sec:probcircuits}

Using~\autoref{thm:general}, we can obtain the following lower bound on the probability over the choice of the circuit $C$ of obtaining non-trivial fidelity:

\begin{restatable}[Lower bounding for the probability of success]{corollary}{generalprob} \label{cor:generalprob}
Let $n,d,\cD,L$ be as in Theorem~\ref{thm:general}. Then there is a randomized $poly(n,2^L)$ time algorithm such that for every  $1 \leq n \leq \floor*{\frac{n}{L}}$ and $0<\epsilon<1$, 
\[
\Pr_{C\sim\cD}\left[\cF_C\left(A_C\right)\geq\left(\left(1+15^{-d}\right)^{m}-1\right)\right] \geq \frac{(1-\epsilon)\cdot\left(\left(1+15^{-d}\right)^m-1\right)}{2^m-1} \geq \Omega\left( \frac{m \cdot 15^{-d}}{2^m} \right)\, .
\]
\end{restatable}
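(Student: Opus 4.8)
The plan is to convert the lower bound on the expectation $\Exp_{C\sim\cD}[\cF_C(A_C)] \geq (1+15^{-d})^m - 1$ from \autoref{thm:general} into a lower bound on the probability that $\cF_C(A_C)$ exceeds a threshold, by exploiting the fact that $\cF_C(A_C)$ is a \emph{bounded} nonnegative random variable. The key observation I would make first is the range bound: from \autoref{eq:XEB pc} we have $\cF_C(A_C) = 2^m \sum_{x_I} q_C(I,x_I)^2 - 1$, and since $q_C(I,\cdot)$ is a probability distribution over $\{0,1\}^m$, the sum-of-squares satisfies $2^{-m} \leq \sum_{x_I} q_C(I,x_I)^2 \leq 1$ (lower bound by Cauchy--Schwarz / power-mean, upper bound since each term is at most $q_C(I,x_I)$). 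Hence $0 \leq \cF_C(A_C) \leq 2^m - 1$ pointwise.

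\textbf{Main steps.} With $a := (1+15^{-d})^m - 1$ denoting the expectation lower bound and $b := 2^m - 1$ the almost-sure upper bound, the random variable $Y := \cF_C(A_C)$ satisfies $0 \le Y \le b$ and $\Exp[Y] \geq a$. The second step is a one-line reverse-Markov (Paley--Zygmund-flavored) argument: for any threshold $\tau = (1-\epsilon)a$ I would split the expectation as
\[
\Exp[Y] = \Exp[Y \cdot \mathbf{1}_{Y < \tau}] + \Exp[Y \cdot \mathbf{1}_{Y \geq \tau}] \leq \tau + b \cdot \Pr[Y \geq \tau],
\]
using $Y \le b$ on the second term and $Y < \tau$ on the first. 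Rearranging gives
\[
\Pr[Y \geq \tau] \geq \frac{\Exp[Y] - \tau}{b} \geq \frac{a - (1-\epsilon)a}{b} = \frac{\epsilon \, a}{2^m - 1}.
\]
This already yields the stated bound up to the placement of the $(1-\epsilon)$ factor; matching the corollary's exact form is a matter of choosing the threshold as the claimed quantity and carrying $(1-\epsilon)$ in the numerator rather than as a factor of $\epsilon$. (I would double-check the precise statement, since the displayed inequality has $(1-\epsilon)$ multiplying the numerator while the event is $\{\cF_C(A_C) \geq a\}$ without the $(1-\epsilon)$; I suspect the intended reading is that the threshold inside the probability is $(1-\epsilon)a$, and I would reconcile this when writing the final version.)

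\textbf{Final step.} The last step is the asymptotic simplification $\frac{\epsilon\, a}{2^m - 1} = \Omega\!\left(\frac{m \cdot 15^{-d}}{2^m}\right)$. Here I would use the elementary bound $(1+x)^m - 1 \geq mx$ valid for $x \geq 0$ (the first-order term of the binomial expansion, all of whose terms are nonnegative), applied with $x = 15^{-d}$, to get $a \geq m \cdot 15^{-d}$. Combined with $2^m - 1 \leq 2^m$ this gives the claimed $\Omega(m \cdot 15^{-d}/2^m)$ after absorbing the constant $\epsilon$ (taken, say, to be a fixed constant like $1/2$).

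\textbf{Main obstacle.} The mathematics here is genuinely light — reverse Markov plus two elementary inequalities — so the only real subtlety is bookkeeping: establishing the almost-sure upper bound $Y \le 2^m - 1$ cleanly (which follows from the disjoint-light-cone factorization already proved in \autoref{lem:disjoint light cone} together with the trivial bound on sum-of-squares of a distribution), and reconciling exactly where the $(1-\epsilon)$ factor sits so that the final chain of inequalities in the corollary statement is literally correct. I do not anticipate any technical difficulty beyond this.
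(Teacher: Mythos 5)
Your proposal is correct and matches the paper's own proof essentially verbatim: both run the same reverse-Markov argument, combining the expectation lower bound of \autoref{thm:general} with the pointwise bound $\cF_C(A_C)\le 2^m$ coming from \autoref{eq:XEB pc}, differing only in whether the threshold is parametrized as $(1-\epsilon)a$ (yours, giving $\epsilon a/(2^m-1)$) or $\epsilon a$ (the paper's, giving $(1-\epsilon)a/2^m$). You are also right to flag the displayed event: the paper's own proof in fact bounds $\Pr\left[\cF_C(A_C)>\epsilon\left(\left(1+15^{-d}\right)^m-1\right)\right]$, i.e.\ the threshold inside the probability should carry the $\epsilon$ factor, exactly the reconciliation you anticipated.
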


\begin{proof}[Proof of~\autoref{cor:generalprob}]
The idea is simple - since our algorithm picks $n-m$ bits uniformly at random, for every circuit $C$, by~\autoref{eq:XEB pc},  $\cF_C(A_C)\leq2^m$. Now, for any $0<\epsilon<1$, let $\delta=\Pr_{C\sim\cD}\left[\cF_C(A_C)>\epsilon\cdot\left(\left(1+15^{-d}\right)^m-1\right)\right]$, we have
\[
\left(1+15^{-d}\right)^m-1\leq\Exp_{C\sim\cD}\left[\cF_C(A_C)\right]\leq\delta\cdot2^m+\epsilon\cdot\left(\left(1+15^{-d}\right)^m-1\right) \, .
\]
Thus,
\[
\delta\geq\frac{(1-\epsilon)\cdot\left(\left(1+15^{-d}\right)^m-1\right)}{2^m} \, .
\]

\end{proof}

\section{Sample complexity analysis}\label{sec:sample complexity}

In this section, we discuss the empirical linear XEB of our algorithm. Namely, how many samples are required so that the empirical average of the linear XEB can be non-trivially lower bounded. Specifically, the goal would be the following. For some $T=\poly(n)$,
\begin{equation}\label{eq:sample complexity goal}
\Pr_{\substack{C\sim\cD\\x_1,\dots,x_T\sim A_C}}\left[\frac{1}{T}\sum_{i=1}^n\cF_C(x_i)=\Omega(1)\right]\geq\frac{1}{\poly(n)} \, .
\end{equation}

In~\autoref{sec:exp}, we have shown that the expectation of the linear XEB of our algorithm is at least $\left(1+15^{-d}\right)^{m}$ for $1\leq m\leq\floor*{n/L}$ with probability $1/\poly(n)$ over the choice of random circuits. Thus, to achieve~\autoref{eq:sample complexity goal}, it suffices to show that the probability of the empirical average of the linear XEB deviating from $\cF_C(x)$ is small.

In general, it is a difficult task to rigorously upper bound the sample complexity of linear XEB.
The reason is that such analysis needs to handle higher moment of $q_C$ which is highly non-trivial for even 2D circuits.
In this work we stick with the simpler case of analyzing the variance of linear XEB in~\autoref{lem:sample complexity and variance}. We further show in~\autoref{lem:variance to collision prob} that an inverse exponential bound on the collision probability of $q_C$ would be sufficient for giving $\poly(n)$ upper bound for the sample complexity.

\subsection{A variance/collision probability approach}
The variance of $\cF_C(x)$ is sufficient for upper bounding the number of samples required for the empirical linear XEB to converge. Specifically,Chebyshev's inequality implies that with $\Var_{x\sim p}[\cF_C(x)]/(\epsilon^2\delta)$ many samples, the empirical XEB is at least $\cF_C(p)-\epsilon$ with probability $\delta$ over the randomness of $p$. Note that here the circuit $C$ is fixed.

\begin{lemma}\label{lem:sample complexity and variance}
Let $C$ be an $n$-qubit quantum circuit and $q_C$ be the pdf of the distribution obtained from $C\ket{0^n}$. For any pdf $p:\{0,1\}^n\rightarrow[0,1]$ and $\epsilon,\delta\in(0,1)$, we have
\[
\Pr_{x_1,\dots,x_T\sim p}\left[\frac{1}{T}\sum_{i=1}^T\cF_C(x_i)\leq\cF_C(p)-\epsilon\right]\leq\delta
\]
when $T\geq\frac{\Var_{x\sim p}[\cF_C(x)]}{\epsilon^2\delta}$.
\end{lemma}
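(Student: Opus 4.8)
The plan is to recognize this as a routine application of Chebyshev's inequality to the empirical mean of i.i.d.\ samples; all of the genuine difficulty in the sample-complexity analysis lies elsewhere (in bounding $\Var_{x\sim p}[\cF_C(x)]$ itself, which is deferred to the collision-probability estimates), so this lemma is really just the packaging step that converts a variance bound into a tail bound.

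Concretely, I would fix the circuit $C$ and set $Y \coloneqq \frac{1}{T}\sum_{i=1}^T \cF_C(x_i)$, where $x_1,\dots,x_T$ are drawn independently from $p$. Since the $x_i$ are i.i.d.\ and $\cF_C(x_i) = 2^n q_C(x_i) - 1$ is a deterministic function of $x_i$, each term $\cF_C(x_i)$ is an i.i.d.\ random variable with mean $\Exp_{x\sim p}[\cF_C(x)] = \cF_C(p)$ and variance $\Var_{x\sim p}[\cF_C(x)]$. By linearity of expectation and independence this gives $\Exp[Y] = \cF_C(p)$ and $\Var[Y] = \tfrac{1}{T}\Var_{x\sim p}[\cF_C(x)]$.

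Next I would apply Chebyshev's inequality to $Y$. The event we care about is the one-sided deviation $\{Y \leq \cF_C(p) - \epsilon\}$, which is contained in the two-sided event $\{|Y - \Exp[Y]| \geq \epsilon\}$, so
\[
\Pr_{x_1,\dots,x_T\sim p}\left[\frac{1}{T}\sum_{i=1}^T\cF_C(x_i)\leq\cF_C(p)-\epsilon\right]
\;\leq\; \Pr\big[\,|Y-\Exp[Y]|\geq\epsilon\,\big]
\;\leq\; \frac{\Var[Y]}{\epsilon^2}
\;=\; \frac{\Var_{x\sim p}[\cF_C(x)]}{T\epsilon^2}\, .
\]
Finally, substituting the hypothesis $T\geq \Var_{x\sim p}[\cF_C(x)]/(\epsilon^2\delta)$ into the right-hand side bounds it by $\delta$, which is exactly the claimed inequality. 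There is no real obstacle here: the only modelling point worth stating explicitly is that the $\cF_C(x_i)$ are genuinely i.i.d.\ (the circuit $C$ is held fixed and the randomness is purely in the sampling from $p$), which justifies the $1/T$ variance reduction; everything else is the textbook Chebyshev estimate.
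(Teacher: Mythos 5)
Your proposal is correct and follows exactly the paper's own argument: fix $C$, note the $\cF_C(x_i)$ are i.i.d.\ with mean $\cF_C(p)$ and variance $\Var_{x\sim p}[\cF_C(x)]$, and apply Chebyshev's inequality to the empirical mean. The only difference is that you spell out the one-sided-to-two-sided containment explicitly, which the paper leaves implicit.
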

\begin{proof}
Since $\{\cF_C(x_i)\}$ are i.i.d. random variables with mean $\cF_C(p)$ and variance $\Var_{x\sim p}[\cF_C(x)]$, by Chebyshev's inequality, we have
\begin{align*}
\Pr_{x_1,\dots,x_T\sim p}\left[\frac{1}{T}\sum_{i=1}^T\cF_C(x_i)<\cF_C(p)-\epsilon\right] &\leq\frac{\Var_{x\sim p}[\cF_C(x)]}{T\cdot\epsilon^2} \, .
\end{align*}
As we pick $T\geq\frac{\Var_{x\sim p}[\cF_C(x)]}{\epsilon^2\delta}$, the above error is at most $\delta$ desired.
\end{proof}

The following lemma further shows that to upper bound the variance of our algorithm, it suffices to bound the \textit{collision probability} of the ideal distribution.

\begin{lemma}\label{lem:variance to collision prob}
Let $n,d,L\in\N$ and $\cD$ be distribution over $n$-qubit quantum circuits with (i) light cone size at most $L$, (ii) depth at most $d$, and (iii) with Haar random $2$-qubit gates. Let $1\leq m\leq\floor*{n/L}$ and $A$ be the algorithm from~\autoref{algo:main}, we have
\[
\Var_{x\sim A_C}[\cF_C(x)]\leq2^{m+n}\sum_{x\in\{0,1\}^n}q_C(x)^2
\]
where $\sum_{x\in\{0,1\}^n}q_C(x)^2$ is also known as the collision probability of $q_C$.
\end{lemma}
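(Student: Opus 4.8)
The plan is to bound the variance $\Var_{x\sim A_C}[\cF_C(x)] \leq \Exp_{x\sim A_C}[\cF_C(x)^2]$ and to relate the second moment of $\cF_C$ under the algorithm's distribution $A_C$ directly to the collision probability of the ideal distribution $q_C$. First I would write out the second moment explicitly. Since $\cF_C(x) = 2^n q_C(x) - 1$, we have $\cF_C(x)^2 = 2^{2n} q_C(x)^2 - 2^{n+1} q_C(x) + 1$, and therefore
\[
\Exp_{x\sim A_C}\left[\cF_C(x)^2\right] = 2^{2n}\sum_{x} A_C(x) q_C(x)^2 - 2^{n+1}\sum_x A_C(x) q_C(x) + 1 \, .
\]
The dominant term is the first one, so the crux is to control $\sum_x A_C(x) q_C(x)^2$.

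The key step is to use the explicit form of $A_C$ recorded in \autoref{sec:exp}, namely $A_C(x) = \frac{1}{2^{n-m}} q_C(I, x_I)$, where $I = \{i_1,\dots,i_m\}$ is the set of selected output qubits. Plugging this in gives $\sum_x A_C(x) q_C(x)^2 = 2^{m-n}\sum_x q_C(I,x_I) q_C(x)^2$. Since a marginal probability is always at most $1$, we can bound $q_C(I,x_I) \leq 1$ and pull it out, obtaining $\sum_x A_C(x) q_C(x)^2 \leq 2^{m-n}\sum_x q_C(x)^2$. Substituting back yields
\[
\Exp_{x\sim A_C}\left[\cF_C(x)^2\right] \leq 2^{2n} \cdot 2^{m-n}\sum_x q_C(x)^2 + 1 = 2^{m+n}\sum_x q_C(x)^2 + 1 \, ,
\]
after discarding the negative middle term. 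Since the collision probability $\sum_x q_C(x)^2 \geq 2^{-n}$ always holds (by Cauchy–Schwarz, as the distribution lives on $2^n$ points), the additive $+1$ is dominated by $2^{m+n}\sum_x q_C(x)^2 \geq 2^m$, and can be absorbed into the bound, giving the claimed inequality $\Var_{x\sim A_C}[\cF_C(x)] \leq 2^{m+n}\sum_x q_C(x)^2$.

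I do not expect any serious obstacle here; the argument is essentially a direct substitution followed by the trivial bound $q_C(I,x_I)\leq 1$. The only points requiring minor care are justifying that the additive constant $1$ can be folded into the stated bound (using the universal lower bound on collision probability), and confirming that dropping the negative linear term is valid since it only decreases the expression. The mild inefficiency of bounding the marginal by $1$ is presumably where slack enters the analysis, but it keeps the statement clean and suffices for the downstream polynomial sample-complexity goal when combined with \autoref{lem:sample complexity and variance} and an anti-concentration (collision probability) bound on $q_C$.
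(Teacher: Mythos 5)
Your overall strategy is the same as the paper's: bound the variance by a second moment and then use $A_C(x)=2^{m-n}q_C(I,x_I)\leq 2^{m-n}$ to pull out the collision probability. However, the last step of your argument is not valid as written. After discarding the negative cross term you arrive at $\Var_{x\sim A_C}[\cF_C(x)]\leq 2^{m+n}\sum_x q_C(x)^2+1$, and you then claim the additive $1$ ``can be absorbed'' because $2^{m+n}\sum_x q_C(x)^2\geq 2^m$. But $X+1\leq X$ is false for every $X$; domination only lets you conclude $X+1\leq 2X$, which proves the lemma only up to a factor of $2$, not the stated inequality.

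The clean fix, which is what the paper does, is to avoid generating the $+1$ in the first place: since variance is invariant under additive shifts, $\Var_{x\sim A_C}[\cF_C(x)]=\Var_{x\sim A_C}[2^nq_C(x)]\leq \Exp_{x\sim A_C}\left[2^{2n}q_C(x)^2\right]=2^{2n}\sum_x A_C(x)q_C(x)^2\leq 2^{m+n}\sum_x q_C(x)^2$, with no cross term and no additive constant. Alternatively, within your own expansion you could keep the middle term and observe that
\[
2^{n+1}\sum_x A_C(x)q_C(x)=2^{n+1}\cdot 2^{m-n}\sum_{x_I}q_C(I,x_I)^2\geq 2^{n+1}\cdot 2^{m-n}\cdot 2^{-m}=2\geq 1\,,
\]
so that $-2^{n+1}\sum_x A_C(x)q_C(x)+1\leq 0$ and the exact bound follows. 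Either repair is one line, but as submitted the final inequality does not follow from what precedes it.
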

\begin{proof}[Proof of~\autoref{lem:variance to collision prob}]
Consider the variance of the linear XEB of our algorithm $A_C$ as follows.
\begin{align*}
\Var_{x\sim A_C}[\cF_C(x)]&\leq\Exp_{x\sim A_C}[2^{2n}q_C(x)^2]=2^{2n}\sum_{x\in\{0,1\}^n}A_C(x)q_C(x)^2 \, .
\intertext{Recall that $A_C(x)\leq2^{m-n}$ for all $x$, thus the equation becomes}
&\leq2^{m+n}\sum_{x\in\{0,1\}^n}q_C(x)^2 \, .
\end{align*}
\end{proof}

To have some intuition on~\autoref{lem:variance to collision prob}, the right hand side is minimized when $q_C$ is the uniform distribution over $\{0,1\}^n$ where the collision probability is $2^{-n}$. In such case, the variance of our algorithm is $O(2^m)$. When choosing $m=O(\log n)$, the sample complexity of our algorithm would be $\poly(n)$ as desired.\\

In general, using the variance/collision probability to upper bound the sample complexity might not be tight. For example, consider the distribution of a sequence of independent biased coins, \textit{i.e.,} $q(x)=(1/2+\epsilon)^{\|x\|_1}\cdot(1/2-\epsilon)^{n-\|x\|_1}$ for each $x\in\{0,1\}^n$. Then the variance $\Var_{x\sim q}[q(x)]$ is exponentially large, however, the sample complexity of having the empirical average of $q(x)$ being of the order of $\Exp_{x\sim q}[q(x)]$ is $O(1)$ with high probability. Specifically, when the depth of the random circuit is $1$, then the marginal distribution looks like the above biased coins distribution with high probability and thus undesirable.

On the other extreme where the random circuit is very deep, it is known that the marginal distribution will converge to the \textit{Porter Thomas distribution} and its collision probability is $O(2^{-n})$ \cite{google19}.

In~\autoref{sec:sample complexity 1D}, we further show that the collision probability of $q_C$ is $O(2^{-n})$ in expectation for 1D circuit of depth at least $(\log n)/\log(5/4)$.
While the proof could potentially be extended to 2D circuit and beyond, we leave it as a future direction and state the following conjecture.

\begin{restatable}{conjecture}{collisionconjecture}\label{conj:collision 2D}
Let $n,d\in\N$ and $\cD$ be a distribution of $n$-qubit. For a circuit $C$, denote $q_C$ as the pdf of $C\ket{0^n}$. We conjecture that there exists a constant $c>0$ such that when $\cD$ is the distribution over 2D random circuits of depth $d\geq c\sqrt{\log n}$,
\[
\Exp_{C\sim\cD}\left[\sum_{x\in\{0,1\}^n}q_C(x)^2\right]=O\left(\frac{1}{2^n}\right) \, .
\]
\end{restatable}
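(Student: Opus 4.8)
The plan is to compute the expected collision probability with the same second-moment tensor-network machinery developed for the single-qubit analysis, and then to reduce the resulting classical statistical-mechanics model to a two-dimensional analog of the one-dimensional estimate already established in \autoref{lem:anti concentration of 1D}. Writing $\rho = U\ket{0^n}\bra{0^n}U^\dagger$, the first step is to observe that $\sum_x q_C(x)^2 = \tr\left[\left(\sum_x \ket{xx}\bra{xx}\right)\rho^{\otimes 2}\right]$, a second-moment (order-$2$) quantity built from two copies of $U$ and two of $U^\dagger$, i.e.\ exactly the four-state-vector object already appearing in \autoref{lem:markov chain}. Applying the order-$2$ Haar integration formula gate-by-gate, exactly as in \autoref{lem:single}, turns the average into the same kind of Markov chain as \autoref{lem:markov chain}: each $2$-qubit gate contributes the rank-two transfer matrix $M$, whose effective per-qubit state space is the two-dimensional commutant of $U^{\otimes 2}$ spanned by $\{I,\mathrm{SWAP}\}$. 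The only change from \autoref{lem:markov chain} is the output boundary, which is now the diagonal collision tensor $\sum_x \ket{xx}\bra{xx}$ in place of $Z\otimes I^{\otimes n-1}$; the input boundary (four copies of $\ket{0^n}$) is unchanged.

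Next I would analyze the resulting statistical-mechanics partition function over configurations $\sigma\in\{I,\mathrm{SWAP}\}$ on every qubit-wire of every layer. The two constant (ferromagnetic) configurations, all-$I$ and all-$\mathrm{SWAP}$, each contribute $\Theta(2^{-n})$, matching the target order and reproducing the $2\cdot 2^{-n}$ collision probability of the Porter--Thomas distribution. The entire content of the conjecture is therefore the \emph{upper} bound: the total weight of every mixed configuration, i.e.\ those containing a domain wall separating $I$-regions from $\mathrm{SWAP}$-regions, must also be $O(2^{-n})$. Each interface edge between the two phases carries a multiplicative suppression factor bounded away from $1$, so the natural tool is a Peierls-type / low-temperature cluster expansion that bounds the sum over domain-wall configurations by a geometric series in the total interface length, provided the depth is large enough that each output qubit's light cone sits deep inside a single phase.

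To pin down the threshold $d \geq c\sqrt{\log n}$ I would exploit the geometry of the light cone (\autoref{def:light cone}): in two dimensions the light cone of an output qubit at depth $d$ has radius $\Theta(d)$ and hence size $L=\Theta(d^2)$, so a depth-$d$ two-dimensional circuit behaves, as far as a single output's correlations are concerned, like a one-dimensional circuit of depth $\Theta(d^2)$. Matching this to the one-dimensional threshold of \autoref{lem:anti concentration of 1D}, which requires depth $\Omega(\log n)$, heuristically forces $d^2=\Omega(\log n)$, i.e.\ $d=\Omega(\sqrt{\log n})$. Concretely one would try to factor the $2$D transfer operator as a composition of $1$D transfer operators along the two spatial axes and transport the $1$D spectral-gap/mixing estimate that underlies \autoref{lem:anti concentration of 1D}.

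The hard part, and the reason this remains a conjecture, is that the required bound is \emph{two-sided}: unlike the single-qubit lower bound of \autoref{thm:single}, where it sufficed to exhibit one favorable assignment of the Pauli variables, here we must control the \emph{entire} partition function, including the interactions between domain walls. In two dimensions a domain wall is a one-dimensional curve whose entropy (the number of curves of a given length) grows exponentially and can compete with its exponentially small Boltzmann weight, so a naive Peierls bound need not deliver the sharp $O(2^{-n})$ constant, and the clean $1$D Markov-chain mixing argument does not factor cleanly through the non-commuting $2$D transfer matrix. Establishing that the energy--entropy balance of $2$D domain walls becomes favorable precisely above depth $\sqrt{\log n}$, rather than at some larger threshold, is the crux that would be needed to turn this plan into a proof.
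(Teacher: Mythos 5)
This statement is stated in the paper as a \emph{conjecture}, and the paper offers no proof of it: the authors prove the analogous bound only for one-dimensional circuits (\autoref{lem:anti concentration of 1D}, via the Hunter-style reduction to domain walls on a triangular lattice) and explicitly leave the two-dimensional case as future work. Your proposal is therefore not being measured against a proof in the paper, and, to your credit, you do not claim to have one. The route you sketch is essentially the one the paper itself points toward: rewrite $\sum_x q_C(x)^2$ as a second-moment tensor network, apply the order-$2$ Haar integration formula gate-by-gate to land in the $\{\mathbb{I},\mathbb{S}\}$ permutation (commutant) basis as in \autoref{lem:single exp permutation}, identify the two ferromagnetic configurations as the $\Theta(2^{-n})$ contribution, and control everything else by a domain-wall/Peierls argument. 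Your heuristic for the $\sqrt{\log n}$ threshold via the light-cone size $L=\Theta(d^2)$ also matches the paper's stated motivation.

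The genuine gap is exactly the one you name in your last paragraph, and it is worth being precise about why it is not merely a bookkeeping issue. In the 1D proof, a domain wall is a union of \emph{disjoint monotone paths} on a width-$d$ strip, so the entropy of walls of a given length is at most $2^{d}$ per starting point and the weight per layer is $2/5$, giving the clean $\bigl(1+(4/5)^d\bigr)^{n/2}$ bound; moreover the type-(ii) contribution is absorbed by comparison with the infinite-depth (Porter--Thomas) circuit. In 2D neither step survives as stated: domain walls are closed surfaces/curves whose number grows like $\mu^{\ell}$ for a lattice connective constant $\mu$ that is not obviously dominated by the per-unit-length Boltzmann suppression at depth only $\Theta(\sqrt{\log n})$, and there is no analogue of the one-dimensional infinite-depth comparison that cleanly isolates the boundary-attached walls. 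So the energy--entropy competition you flag is the entire content of the conjecture, and your proposal does not resolve it --- it restates the problem in the statistical-mechanics language the paper already uses. As a research plan it is faithful to the paper's intended approach; as a proof it establishes nothing beyond what \autoref{lem:anti concentration of 1D} already gives in one dimension.
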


\subsection{The sample complexity of 1D random circuits of logarithmic depth}\label{sec:sample complexity 1D}

In this subsection, we formally prove that the sample complexity of our algorithm is $O\left(2^m\right)$ for random 1D circuits with high probability.

\begin{theorem}\label{thm:sample complexity 1D}
Let $n\in\N$ and $\cD$ be the distribution over $n$-qubit 1D quantum circuits with depth $d=\Omega(\log n)$ and with Haar random $2$-qudit gates where the dimension of the qudit is at least $4$. Let $1\leq m\leq\floor*{n/2d}$ be the number of output qubits used by our algorithm. Then for any $\delta\in(0,1)$, we have
\[
\Pr_{C\sim\cD}\left[\Var_{x\sim A_C}[\cF_C(x)]=O\left(\frac{2^m}{\delta}\right)\right]\geq1-\delta \, .
\]
Specifically, combine with~\autoref{thm:general}, we have
\[
\Pr_{\substack{C\sim\cD\\x_1,\dots,x_T\sim A_C}}\left[\sum_{i=1}^T\cF_C(x_i)=\Omega\left(\frac{1}{\poly(n)}\right)\right]\geq\frac{1}{\poly(n)}
\]
when $T=\Omega(1)$.
\end{theorem}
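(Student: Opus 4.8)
The plan is to prove Theorem~\ref{thm:sample complexity 1D} by combining a high-probability bound on the collision probability of $q_C$ with the variance/collision-probability machinery already established in Lemmas~\ref{lem:sample complexity and variance} and \ref{lem:variance to collision prob}. The central object to control is $CP(q_C) = \sum_{x}q_C(x)^2$, and the first step would be to show that for $1$D circuits of depth $d = \Omega(\log n)$ we have $\Exp_{C\sim\cD}[CP(q_C)] = O(2^{-n})$; this is precisely the content promised in \autoref{sec:sample complexity 1D} and I would invoke it (the analogue of \autoref{lem:anti concentration of 1D} mentioned in the introduction). The collision probability, like the single-qubit marginal in \autoref{thm:single}, can be written as a second-moment quantity $\Exp_C\tr((U^\dagger\ket{0^n}\bra{0^n}U)^{\otimes 2}\,\mathrm{SWAP})$, which after applying the Haar integration formula of \autoref{lem:single} collapses to a Markov chain of the same type appearing in \autoref{lem:markov chain}. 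The depth threshold $d\ge(\log n)/\log(5/4)$ arises from requiring the subdominant eigenvalue contributions of that chain to decay below the leading $2^{-n}$ term.

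Next I would convert the expectation bound into a high-probability statement. Since $CP(q_C)\ge 0$, Markov's inequality gives directly that $\Pr_{C\sim\cD}[CP(q_C) \le \tfrac{c}{\delta}\cdot 2^{-n}]\ge 1-\delta$ for the constant $c$ hidden in the $O(2^{-n})$ expectation bound. Feeding this into \autoref{lem:variance to collision prob}, which states $\Var_{x\sim A_C}[\cF_C(x)]\le 2^{m+n}CP(q_C)$, yields that with probability at least $1-\delta$ over the circuit,
\[
\Var_{x\sim A_C}[\cF_C(x)] \le 2^{m+n}\cdot \frac{c}{\delta}\,2^{-n} = O\!\left(\frac{2^m}{\delta}\right),
\]
which is exactly the first displayed conclusion of the theorem.

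For the second displayed conclusion I would chain three events. By \autoref{thm:general} (together with \autoref{cor:generalprob}), with probability $1/\poly(n)$ over $C$ the fidelity satisfies $\cF_C(A_C)\ge(1+15^{-d})^m - 1 = \Omega(1/\poly(n))$ for $d=\Theta(\log n)$ and $m=\Theta(\log n)$. On the same (or an intersecting, by a union bound) good event the variance bound above holds. Conditioned on such a circuit, \autoref{lem:sample complexity and variance} says that $T \ge \Var_{x\sim A_C}[\cF_C(x)]/(\epsilon^2\delta) = O(2^m/(\epsilon^2\delta^2))$ samples suffice for the empirical average $\tfrac{1}{T}\sum_i \cF_C(x_i)$ to stay within $\epsilon$ of $\cF_C(A_C)$ with probability $1-\delta$; taking $\epsilon$ a constant fraction of the fidelity and $m=O(\log n)$ makes $T=\poly(n)$, and in fact $T=\Omega(1)$ already gives an inverse-polynomial empirical value with inverse-polynomial probability.

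The main obstacle is the first step: establishing $\Exp_C[CP(q_C)] = O(2^{-n})$ for logarithmic-depth $1$D circuits. Unlike the single-qubit marginal analysis of \autoref{thm:single}, where lower-bounding the Markov-chain sum only required exhibiting one good assignment, here an \emph{upper} bound is needed, so every term of the chain must be controlled and the spectral gap of the transfer matrix made explicit. This is the technically delicate anti-concentration estimate; the depth threshold $(\log n)/\log(5/4)$ is what guarantees the chain has mixed enough that the collision probability is dominated by its uniform-distribution value $2^{-n}$ rather than by correlated configurations. I expect this transfer-matrix/eigenvalue analysis, rather than the subsequent Markov- and Chebyshev-inequality bookkeeping, to be where essentially all the work lies.
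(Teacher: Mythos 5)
Your outer scaffolding is exactly the paper's: bound $\Var_{x\sim A_C}[\cF_C(x)]$ by $2^{m+n}\sum_x q_C(x)^2$ via \autoref{lem:variance to collision prob}, convert the expectation bound on the collision probability into a $1-\delta$ statement by Markov, and then chain with \autoref{thm:general}/\autoref{cor:generalprob} and Chebyshev (\autoref{lem:sample complexity and variance}) for the empirical claim. That part is correct and needs no changes.

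The gap is in the one step you defer, which is the entire content of the theorem: establishing $\Exp_{C\sim\cD}\bigl[\sum_x q_C(x)^2\bigr]=O(2^{-n})$ for $1$D circuits of depth $\geq (\log n)/\log(5/4)$ (\autoref{lem:anti concentration of 1D}). You propose to get this by writing the collision probability as a second moment, applying the Pauli-basis Haar formula of \autoref{lem:single}, and bounding the subdominant eigenvalues of the resulting transfer matrix. The paper explicitly states that this Markov-chain route is \emph{not} sufficient here: the chain was tailored to a single-qubit $Z$ observable, whereas the collision probability's output boundary condition is a sum over all $2^n$ strings in $\{I,Z\}^n$, so one must upper-bound a sum over exponentially many nonnegative trajectories rather than exhibit a single good one, and "making the spectral gap explicit" is precisely the unsolved combinatorial problem. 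The paper instead switches to the permutation ($S_2$) basis (\autoref{lem:single exp permutation}), reduces the expectation to a partition function of a spin system on a hexagonal lattice, sums out half the spins to land on a triangular lattice (\autoref{lem:exp plauette}), and bounds the result by enumerating domain-wall configurations: left-to-right paths contribute at most $\bigl(1+(4/5)^d\bigr)^{n/2}=O(1)$ once $d\geq(\log n)/\log(5/4)$ (this is where the depth threshold actually comes from, not from an eigenvalue gap), and boundary-returning paths are controlled by comparison with the infinite-depth circuit, whose collision probability is known to be $O(2^{-n})$. You correctly identified that this is where all the work lies, but the argument you sketch for it is the one the authors abandoned, and without the domain-wall (or an equivalently explicit) analysis the proof is incomplete.
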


\begin{proof}[Proof of~\autoref{thm:sample complexity 1D}]

Let us first show that upper bounding the second moment of $q_C$ is sufficient for proving~\autoref{thm:sample complexity 1D}. Consider the variance of the linear XEB of our algorithm $A_C$ as follows.
\begin{align*}
\Var_{x\sim A_C}[\cF_C(x)]&\leq\Exp_{x\sim A_C}[2^{2n}q_C(x)^2]=2^{2n}\sum_{x\in\{0,1\}^n}A_C(x)q_C(x)^2 \, .
\intertext{Recall that $A_C(x)\leq2^{m-n}$ for all $x$, thus the equation becomes}
&\leq2^{m+n}\sum_{x\in\{0,1\}^n}q_C(x)^2 \, .
\end{align*}

Next, the lemma below shows that the second moment term $\sum_{x\in\{0,1\}^n}q_C(x)^2$ is exponentially small with high probability over the choice of $C$.

\begin{restatable}{lemma}{anticoncentrationoneD}
\label{lem:anti concentration of 1D}
Let $n\in\N$ and $\cD$ be the distribution over $n$-qubit 1D quantum circuits with depth at least $\frac{\log n}{\log(5/4)}$ and with Haar random $2$-qubit gates. Then we have
\[
\Exp_{C\sim\cD}\left[\sum_{x\in\{0,1\}^n}q_C(x)^2\right]=O\left(\frac{1}{2^n}\right) \, .
\]
\end{restatable}
The proof of~\autoref{lem:anti concentration of 1D} is based on the Ising model analysis by~\cite{Hunter19}. We postpone it to~\autoref{sec:proof of anti concentration 1D}. Now, let us complete the proof of~\autoref{thm:sample complexity 1D}. By~\autoref{lem:anti concentration of 1D}, we have
\begin{align*}
\Exp_{C\sim\cD}\left[\Var_{x\sim A_C}[\cF_C(x)]\right]&\leq 2^{m+n}\Exp_{C\sim\cD}\left[q_C(x)^2\right]\leq O\left(2^m\right) \, .
\end{align*}
Thus, for any $\delta>0$, by Markov's inequality, we have $\Pr_{C\sim\cD}[\Var_{x\sim A_C}[\cF_C(x)]=O(2^m/\delta)]\geq1-\delta$ as desired.
\end{proof}

\subsection{Proof of Lemma~\ref{lem:anti concentration of 1D}}\label{sec:proof of anti concentration 1D}

It turns out that the previous Markov chain approach in analysis the expected linear XEB of our algorithm is not sufficient for upper bounding the expectation of $\sum_{x\in\{0,1\}^n}q_C(x)^2$. We thus consider a different approach by reducing the quantity to a combinatorial problem in a spin system on lattice. The proof is highly inspired by a recent paper of Hunter~\cite{Hunter19}.

For the convenience of the analysis, here we fix the following skeleton for 1D circuit while the result can be easily extended to other variants.

\begin{equation}\label{eq:1D}
\tikzfig{tikz/1D} \ .
\end{equation}

\paragraph{Step 1: Reducing to counting spin configurations on a hexagonal lattice}
First, let us rewrite $\sum_{x\in\{0,1\}^n}q_C(x)^2$ into an equivalent tensor network.
\begin{align*}
\sum_{x\in\{0,1\}^n}q_C(x)^2 &= \sum_{x\in\{0,1\}^n}\tr\left(\bra{0^n}U\ket{x}\bra{x}U^\dagger\ket{0^n}\right)^2\\
&=\sum_{x\in\{0,1\}^n}\scalebox{.8}{\tikzfig{tikz/second-moment-tensornet}} \ .
\end{align*}

In~\autoref{lem:single}, we change the basis to the Pauli basis and replace the expectation of a single gate with a transition matrix. Here, we instead stick to the permutation basis in the integration formula and represent a single gate as an \textit{effective vertex}~\cite{BB96}.

\begin{lemma}[\cite{BB96}]\label{lem:single exp permutation}
Let $U_g$ be a Haar-random $2$-qubit gate. We have the following.
\[
\Exp_{U_g}\tikzfig{tikz/transition-matrix-1} = \sum_{\sigma,\tau\in S_2} \tikzfig{tikz/single-exp-4} \ .
\]
Here $S_2=\{\mathbb{I},\mathbb{S}\}$ is the permutation group of two elements and an edge on the left represents four edges on the right. Specifically, the edge with $\sigma$ and $\tau$ on the two ends has weight $\bra{\sigma}\ket{\tau}$ where
\[
\bra{\sigma}\ket{\tau} = \left\{\begin{array}{ll}
\frac{1}{15}     & ,\ \sigma=\tau \\
\frac{-1}{60}     & ,\ \sigma\neq\tau \, .
\end{array}
\right.
\]
for all $\sigma,\tau\in S_2$. As for the boundary condition, for each $b_i\in\{0,1\}$, we have
\[
\scalebox{.8}{\tikzfig{tikz/spin-boundary-1}} = \tikzfig{tikz/spin-boundary-2} \ ,\ \ \scalebox{.8}{\tikzfig{tikz/spin-boundary-3}} = \tikzfig{tikz/spin-boundary-4} \ \text{, and } \sum_{\scalebox{.5}{\tikzfig{tikz/perm-blue}}\in S_2}\tikzfig{tikz/spin-boundary-7}=\frac{1}{20}\tikzfig{tikz/spin-boundary-8} \, .
\]
\end{lemma}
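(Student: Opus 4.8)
The plan is to derive this lemma directly from the second-moment Haar integration formula already quoted as \cite[Equation~2.4]{BB96}, recognizing that the latter is nothing but the Weingarten expansion for $U(4)$ at the second moment, and that the right-hand side of the present lemma simply repackages that expansion into a permutation-indexed (``spin'') tensor network. First I would observe that the four products of Kronecker deltas appearing in \cite[Equation~2.4]{BB96} are naturally grouped by a pair of permutations $(\sigma,\tau)\in S_2\times S_2$: the permutation $\sigma$ prescribes how the output-side ($x$) indices of the two copies of $U$ and the two copies of $U^\dagger$ are paired off, while $\tau$ prescribes the analogous pairing of the input-side ($y$) indices. Since $S_2=\{\mathbb{I},\mathbb{S}\}$, there are exactly four $(\sigma,\tau)$ terms, matching the four delta-products in the quoted formula.

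Second, I would read off the coefficients. The ``aligned'' terms, in which $\sigma=\tau$, carry weight $\tfrac{1}{15}$, and the ``crossed'' terms, in which $\sigma\neq\tau$, carry weight $-\tfrac{1}{60}$; these are precisely the Weingarten values $\mathrm{Wg}(\sigma\tau^{-1},4)$, namely $\mathrm{Wg}(\mathbb{I},4)=\tfrac{1}{15}$ and $\mathrm{Wg}(\mathbb{S},4)=-\tfrac{1}{60}$. I would take this as the definition of the edge weight $\bra{\sigma}\ket{\tau}=\mathrm{Wg}(\sigma\tau^{-1},4)$, which is exactly the case table in the statement. Graphically this yields the claimed effective vertex: each gate is replaced by two permutation nodes (one for $\sigma$ on the output side, one for $\tau$ on the input side) joined by a single weighted edge, and the single gate edge on the left ``fans out'' into the four wires (two kets, two bras) whose contractions are dictated by $\sigma$ and $\tau$ — this is the meaning of ``an edge on the left represents four edges on the right.''

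Third, I would verify the boundary conditions by contracting the permutation pairings against the specific boundary tensors. The $\ket{0}$ / $\bra{0}$ inputs simply evaluate the relevant $\delta$'s, and summing a free output index over the computational basis contracts the corresponding pair of wires, giving the first two boundary identities. The last identity, with its factor $\tfrac{1}{20}$, is obtained by summing one of the two permutations over $S_2$ while keeping the other fixed: $\bra{\mathbb{I}}\ket{\mathbb{I}}+\bra{\mathbb{I}}\ket{\mathbb{S}}=\tfrac{1}{15}-\tfrac{1}{60}=\tfrac{1}{20}$, and likewise from $\mathbb{S}$ by symmetry of the weight table.

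I expect the only real obstacle to be bookkeeping rather than mathematics: I must fix a consistent orientation convention so that each of the four bundled wires is correctly identified as a ket or a bra, and hence governed by $\sigma$ (outputs) or $\tau$ (inputs), keeping the Hermitian conjugates and complex conjugations straight. Once that convention is pinned down, every term is a direct substitution from \cite[Equation~2.4]{BB96}. To guard against a sign or normalization slip, I would sanity-check the $\tfrac{1}{20}$ boundary factor and the orientation of the effective vertex against the explicit $d=4$, second-moment Weingarten data before committing to the diagram.
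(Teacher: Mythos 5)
Your derivation is correct and is essentially the argument the paper intends: the lemma is stated with only a citation to \cite{BB96}, and the effective-vertex form is exactly the permutation-indexed repackaging of the second-moment integration formula already quoted in \autoref{sec:single lemma}, with $\tfrac{1}{15}$ and $-\tfrac{1}{60}$ the $d=4$ Weingarten values $\mathrm{Wg}(\mathbb{I})$ and $\mathrm{Wg}(\mathbb{S})$, and the $\tfrac{1}{20}$ boundary factor arising from $\tfrac{1}{15}-\tfrac{1}{60}$. Your identification of the four delta-products with the pairs $(\sigma,\tau)\in S_2\times S_2$ and your contraction of the boundary tensors are exactly right, so no gap remains beyond the orientation bookkeeping you already flag.
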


Apply~\autoref{lem:single exp permutation} on a 1D circuit, the expectation of $\sum_{x\in\{0,1\}^n}q_C(x)^2$ is then exactly the sum over spin configurations on the the hexagonal lattice. For example,~\autoref{eq:1D} becomes the following. Note that each circle represents a distinct choices of elements from $S_2$.

\begin{equation}\label{eq:1D hexagon}
\Exp_{C\sim\cD}\left[\sum_{x\in\{0,1\}^n}q_C(x)^2\right] = 2^n\cdot\sum_{\scalebox{.5}{\tikzfig{tikz/perm-red}},\scalebox{.5}{\tikzfig{tikz/perm-blue}}\in S_2} \scalebox{.5}{\tikzfig{tikz/1D-hexagon}} \ .
\end{equation}

\paragraph{Step 2: Reducing to counting domain wall configurations on a triangular lattice}

The second idea in~\cite{Hunter19} is doing local summation on the blue vertices. Specifically, he showed that the local behavior of a blue vertex and its three red neighboring vertices can be fully described only by the spin of these three red vertices.

\begin{lemma}[{\cite[Equation~18]{Hunter19}}]\label{lem:exp plauette}
Let $U_g$ be a Haar-random $2$-qubit gate. We have the following.
\[
\sum_{\tau\in S_2}\tikzfig{tikz/plauette} = \tikzfig{tikz/plauette-1} = \left\{\begin{array}{ll}
1     & ,\ \sigma_1=\sigma_2=\sigma_3 \\
0     & ,\ \sigma_1\neq\sigma_2=\sigma_3 \\
\frac{2}{5} & \textit{, else.}
\end{array}
\right.
\]
\end{lemma}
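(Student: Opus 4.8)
The plan is to carry out the local summation over the single blue vertex $\tau\in S_2=\{\mathbb{I},\mathbb{S}\}$ explicitly, by writing down the three weighted connections that attach $\tau$ to its red neighbors $\sigma_1,\sigma_2,\sigma_3$ and summing the resulting product over the two values of $\tau$. The crucial structural point is that the three red neighbors couple to $\tau$ in \emph{two different ways}, and it is exactly this asymmetry that separates the case $\sigma_1\neq\sigma_2=\sigma_3$ (value $0$) from the remaining ``else'' configurations (value $\tfrac{2}{5}$). Concretely, in the brickwork architecture a blue (input-permutation) vertex $\tau$ belonging to a gate $g$ is adjacent to: (i) the red (output-permutation) vertex $\sigma_1$ of the \emph{same} gate $g$, coupled through the intra-gate Weingarten weight $\langle\sigma_1|\tau\rangle$ of \autoref{lem:single exp permutation}; and (ii) the red output-permutation vertices $\sigma_2,\sigma_3$ of the two gates feeding into $g$ from the previous layer, each coupled through the single shared qubit line between the two gates. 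Thus $\sigma_1$ is the distinguished leg.

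First I would record the two kinds of edge weights as functions on $S_2$. The Weingarten edge gives $\langle\sigma_1|\tau\rangle=\tfrac{1}{15}$ when $\sigma_1=\tau$ and $-\tfrac{1}{60}$ otherwise. Contracting one shared qubit line between two gates produces a factor $2^{\#\mathrm{cyc}(\tau^{-1}\sigma_j)}$, where the base $2$ is the single-qubit dimension and $\#\mathrm{cyc}$ counts the cycles of the $S_2$ element; since $\mathbb{I}$ has two cycles and $\mathbb{S}$ has one, this factor equals $4$ when $\tau=\sigma_j$ and $2$ when $\tau\neq\sigma_j$. Hence the local plaquette weight is
\[
f(\sigma_1,\sigma_2,\sigma_3)=\sum_{\tau\in S_2}\langle\sigma_1|\tau\rangle\cdot 2^{\#\mathrm{cyc}(\tau^{-1}\sigma_2)}\cdot 2^{\#\mathrm{cyc}(\tau^{-1}\sigma_3)}\,.
\]

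Then I would simply enumerate. Because $|S_2|=2$ there are only $2^3=8$ assignments of $(\sigma_1,\sigma_2,\sigma_3)$, and the $\mathbb{I}\leftrightarrow\mathbb{S}$ relabelling together with the $\sigma_2\leftrightarrow\sigma_3$ symmetry collapse them into the three stated classes. For $\sigma_1=\sigma_2=\sigma_3$ the two $\tau$-terms are $\tfrac{1}{15}\cdot4\cdot4=\tfrac{16}{15}$ and $-\tfrac{1}{60}\cdot2\cdot2=-\tfrac{1}{15}$, summing to $1$. For $\sigma_1\neq\sigma_2=\sigma_3$ they are $-\tfrac{1}{60}\cdot16=-\tfrac{4}{15}$ and $\tfrac{1}{15}\cdot4=\tfrac{4}{15}$, summing to $0$. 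For the remaining configurations, where $\sigma_2\neq\sigma_3$ so that $\sigma_1$ agrees with exactly one of them, the terms are $\tfrac{1}{15}\cdot4\cdot2=\tfrac{8}{15}$ and $-\tfrac{1}{60}\cdot2\cdot4=-\tfrac{2}{15}$, summing to $\tfrac{2}{5}$. This reproduces the three values $1,0,\tfrac{2}{5}$ claimed in the lemma.

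The main obstacle is not the arithmetic, which is an elementary finite check, but fixing the local weights correctly: one must verify from the reduction of \autoref{lem:single exp permutation} that the gate's own output permutation $\sigma_1$ enters through the Weingarten inner product while the two incoming neighbors $\sigma_2,\sigma_3$ enter through single-qubit contraction factors, and that no further normalization survives the interior sum (the boundary factor $\tfrac{1}{20}$ of \autoref{lem:single exp permutation} applies only to top/bottom vertices). Getting this bookkeeping right — in particular the asymmetric role of $\sigma_1$ — is precisely what distinguishes the ``$0$'' and ``$\tfrac{2}{5}$'' cases, so I would cross-check the derived local weights against the explicit hexagonal-lattice expression in \autoref{eq:1D hexagon} before trusting the enumeration; the fact that the all-equal case already normalizes to exactly $1$ is a reassuring consistency check that the setup is correct.
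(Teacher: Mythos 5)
Your derivation is correct: the paper itself gives no proof of this lemma but imports it verbatim from Hunter--Jones (Equation~18 of \cite{Hunter19}), and what you have written is exactly the computation behind that equation --- the blue vertex $\tau$ couples to its same-gate partner $\sigma_1$ through the Weingarten weight $\mathrm{Wg}(\sigma_1\tau^{-1},4)\in\{\tfrac{1}{15},-\tfrac{1}{60}\}$ and to the two upstream neighbors $\sigma_2,\sigma_3$ through single-qubit contraction factors $2^{\#\mathrm{cyc}(\tau^{-1}\sigma_j)}\in\{4,2\}$, and your three case evaluations ($1$, $0$, $\tfrac{2}{5}=\tfrac{q}{q^2+1}$ at $q=2$) all check out. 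You are also right that the only genuinely delicate point is identifying which leg is the distinguished Weingarten one; swapping it with a contraction leg would interchange the $0$ and $\tfrac{2}{5}$ cases, and your observation that the all-equal configuration normalizes to exactly $1$ is the correct sanity check for that bookkeeping.
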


An immediate corollary of~\autoref{lem:exp plauette} is that now we can instead summing over the spin configurations over a triangular lattice. That is,~\autoref{eq:1D hexagon} becomes the following

\begin{equation}\label{eq:1D triangular spin}
\Exp_{C\sim\cD}\left[\sum_{x\in\{0,1\}^n}q_C(x)^2\right] =2^n\cdot\left(\frac{1}{20}\right)^{n/2}\cdot\sum_{\scalebox{.5}{\tikzfig{tikz/perm-red}}\in S_2}\scalebox{.5}{\tikzfig{tikz/plauette-2}}
\end{equation}

The advantage of working on this triangular lattice is that the non-zero term on the right hand side of~\autoref{eq:1D triangular spin} corresponds to a \textit{domain wall} in the triangular lattice.

\begin{definition}[Domain wall]
Consider the right hand side of~\autoref{eq:1D triangular spin} and a configuration to all the red circles. The domain wall for this configuration is a collection of disjoint horizontal lines that separate the circles that are configured to $\mathbb{I}$ from the circles that are configured to $\mathbb{S}$.
\end{definition}

Note that the domain wall configuration is in 1-to-1 correspondence with the spin configurations. Let $\dw$ denote a domain wall, let $w(\dw)$ be the weight of the corresponding spin configuration. Thus,~\autoref{eq:1D triangular spin} becomes the following.

\begin{equation}\label{eq:1D triangular domain wall}
\Exp_{C\sim\cD}\left[\sum_{x\in\{0,1\}^n}q_C(x)^2\right] =\left(\frac{1}{5}\right)^{n/2}\cdot\sum_{\dw\in\scalebox{.1}{\tikzfig{tikz/plauette-2}}} w(\dw) \, .
\end{equation}

Note that a domain wall could contain two types of paths: (i) a path that goes from left boundary to the right boundary and (ii) a path that starts from and ends at both the right boundary. Furthermore, as the domain wall configuration is in 1-to-1 correspondence with subset of disjoint paths,~\autoref{eq:1D triangular domain wall} becomes the following.

\begin{equation}\label{eq:1D triangular domain wall path}
\Exp_{C\sim\cD}\left[\sum_{x\in\{0,1\}^n}q_C(x)^2\right]\leq\left(\frac{1}{5}\right)^{n/2}\cdot\left(1+\sum_{\substack{\text{disjoint paths \textsf{path} of type (i)}\\\in\scalebox{.1}{\tikzfig{tikz/plauette-2}}}} w(\textsf{path})\right)\cdot\left(1+\sum_{\substack{\text{disjoint paths \textsf{path} of type (ii)}\\\in\scalebox{.1}{\tikzfig{tikz/plauette-2}}}} w(\textsf{path})\right)\, .
\end{equation}

\paragraph{Step 3: Upper bound the sum of possible path configurations of type (i)}
For a set of disjoint paths \textsf{path} of type (i), it contains at most $\floor*{n/2}$ paths. Also, a path of type (i) contributes $\left(\frac{q}{q^2+1}\right)^d$ in the weight of a domain wall.

Now, for each $1\leq t\leq\floor*{n/2}$ let us first estimate the number of domain walls having at most $t$ paths of type (i). Specifically, for every $i\in[\floor*{n/2}]$, the number of paths starting from $i$ is at most $2^d$ because at each layer it either moves up or down.
Next, for each $1\leq t\leq\floor*{n/2}$, the number of possible $t$ paths is then at most $\binom{n/2}{t}\cdot2^{dt}$. This gives the following upper bound for the weight contributing from domain wall of type (i).
\begin{align*}
\left(\sum_{\substack{\text{disjoint paths \textsf{path} of type (i)}\\\in\scalebox{.1}{\tikzfig{tikz/plauette-2}}}} w(\textsf{path})\right)&\leq\sum_{t=1}^{\floor*{n/2}}\left(\frac{2}{5}\right)^{dt}\cdot\binom{\floor*{n/2}}{t}\cdot2^{dt}\\
&\leq\sum_{t=1}^{\floor*{n/2}}\left(\frac{4}{5}\right)^{dt}\cdot\binom{\floor*{n/2}}{t}\\
&\leq\left(1+\left(\frac{4}{5}\right)^d\right)^{\floor*{n/2}} \, .
\intertext{Consider $d\geq\frac{\log n}{\log(5/4)}$, the equation becomes}
&\leq\exp\left(\left(\frac{4}{5}\right)^{\frac{\log n}{\log(5/4)}}\frac{n}{2}\right)=O(1) \, .
\end{align*}

\paragraph{Step 4: Upper bound the sum of possible path configurations of type (ii)}
Let us consider the 1D circuit with infinite depth and denote the distribution as $\cD_\infty$.
Also, since the depth is infinity, the sum of the weight of domain wall with paths of type (i) is negligible. Namely, only paths of type (ii) contribute in the infinite depth circuit. Thus, we have the following upper bound.
\begin{align*}
\left(1+\sum_{\substack{\text{disjoint paths \textsf{path} of type (ii)}\\\in\scalebox{.1}{\tikzfig{tikz/plauette-2}}}} w(\textsf{path})\right)&\leq\left(1+\sum_{\substack{\text{disjoint paths \textsf{path} of type (ii)}\\\in\scalebox{.1}{\tikzfig{tikz/plauette-infty}}\cdots}} w(\textsf{path})\right)\\
&=5^{n/2}\cdot\Exp_{C\sim\cD_\infty}\left[\sum_{x\in\{0,1\}^n}q_C(x)^2\right] \, .
\intertext{Finally, it is a well known fact that the expectation of the sum of squares of marginal probabilities is $O(2^{-n})$ for infinite depth 1D circuit. So the above equation becomes}
&=O\left(\left(\frac{5}{4}\right)^{n/2}\right) \, .
\end{align*}

\paragraph{Wrap up}
To conclude the proof of~\autoref{lem:anti concentration of 1D}, let us plug in the calculations from step 3 and step 4 into~\autoref{eq:1D triangular domain wall path}, this gives us
\begin{align*}
\Exp_{C\sim\cD}\left[\sum_{x\in\{0,1\}^n}q_C(x)^2\right]&\leq\left(\frac{1}{5}\right)^{n/2}\cdot O(1)\cdot O\left(\left(\frac{5}{4}\right)^{n/2}\right)=O\left(\frac{1}{2^n}\right)
\end{align*}
as desired.

\paragraph{Acknowledgements.} We thank Scott Aaronson for helpful discussions.

\bibliography{mybib}
\bibliographystyle{alpha}

\appendix

\section{Useful properties of quantum circuits and tensor networks} \label{app:ommitted}

In this appendix we record some well-known and useful properties of quantum circuits and tensor networks. Readers could find more detailed discussion on tensor network in \cite{bridgeman2017hand} or \cite{gao2018efficient} for the applications to the simulation of quantum circuits.


\begin{lemma}\label{lem:sum of squares of marginal}
Let $C$ be a quantum circuit, $q_C$ be its pdf on input $\ket{0^n}$, and $U$ be the unitary matrix computed by $C$. For each $b\in\{0,1\}$, denote the marginal probability of the first output qubit being $b$ as $q_{C,b}$, we have
\[
\sum_{b\in\{0,1\}}q_{C,b}^2=\frac{1+\tr\left(Z\otimes I^{\otimes n-1}U^\dagger\ket{0^n}\bra{0^n}U\right)^2}{2}
\]
\end{lemma}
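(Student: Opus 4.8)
The plan is to prove the identity in Lemma~\ref{lem:sum of squares of marginal} by expressing the marginal probabilities $q_{C,0}$ and $q_{C,1}$ in terms of the single-qubit reduced density matrix of the output state, and then rewriting the sum of squares using the Pauli decomposition of a $2\times 2$ density matrix. First I would set $\ket{\psi} = U\ket{0^n}$ and write $\rho = \ket{\psi}\bra{\psi}$ for the output state. The marginal distribution of the first qubit is governed by the reduced density matrix $\rho_1 = \tr_{2,\dots,n}(\rho)$, a $2\times 2$ matrix, and we have $q_{C,b} = \bra{b}\rho_1\ket{b} = \tr\bigl((\ketbra{b}{b}\otimes I^{\otimes n-1})\rho\bigr)$ for $b\in\{0,1\}$.

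The key observation is that any single-qubit density matrix decomposes in the Pauli basis as $\rho_1 = \tfrac{1}{2}(I + a\, X + b\, Y + c\, Z)$, where the $Z$-coefficient is $c = \tr(Z\rho_1)$. In our case the diagonal entries are exactly the marginals, so $q_{C,0} = \tfrac{1}{2}(1+c)$ and $q_{C,1} = \tfrac{1}{2}(1-c)$ with $c = \tr(Z\rho_1)$. The next step is the purely algebraic identity
\[
q_{C,0}^2 + q_{C,1}^2 = \left(\tfrac{1+c}{2}\right)^2 + \left(\tfrac{1-c}{2}\right)^2 = \frac{1+c^2}{2}\,.
\]
So it remains only to identify $c^2$ with the trace-squared quantity in the statement.

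For that final identification I would lift the single-qubit observable back to the full $n$-qubit space. Since $\rho_1$ is obtained from $\rho$ by tracing out the last $n-1$ qubits and the operator $Z$ acts only on the first qubit, we have $c = \tr(Z\rho_1) = \tr\bigl((Z\otimes I^{\otimes n-1})\rho\bigr) = \tr\bigl(Z\otimes I^{\otimes n-1}\, U^\dagger\ket{0^n}\bra{0^n} U\bigr)$, using $\rho = U\ket{0^n}\bra{0^n}U^\dagger$ and cyclicity of trace (note the matrix inside the trace in the statement is exactly $\rho$ written as $U^\dagger\ket{0^n}\bra{0^n}U$ up to the convention for which direction $U$ is applied). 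Squaring gives the desired $c^2 = \tr(Z\otimes I^{\otimes n-1}\,U^\dagger\ket{0^n}\bra{0^n}U)^2$, which combined with the algebraic identity completes the proof.

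I do not expect any genuine obstacle here, as the whole statement is a one-qubit fact dressed up with a partial trace; the only point requiring mild care is bookkeeping the partial-trace-versus-full-space correspondence and being consistent with whether $U$ or $U^\dagger$ sends $\ket{0^n}$ to the output state, so that the $Z\otimes I^{\otimes n-1}$ and the projector $\ket{0^n}\bra{0^n}$ land on the correct sides. A clean way to sidestep any ambiguity is to verify the identity directly at the level of $\tr(Z\otimes I^{\otimes n-1}\rho)$ without ever forming $\rho_1$ explicitly, simply observing that this trace equals $q_{C,0}-q_{C,1}$ while $\tr(I^{\otimes n}\rho)=1 = q_{C,0}+q_{C,1}$, and then solving the two-equation linear system for $q_{C,0}^2+q_{C,1}^2$.
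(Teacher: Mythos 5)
Your proposal is correct and follows essentially the same route as the paper: the paper likewise writes $q_{C,b}=\tr\bigl((\ketbra{b}{b}\otimes I^{\otimes n-1})\Pi\bigr)$ with $\Pi=U^\dagger\ket{0^n}\bra{0^n}U$, substitutes $\ketbra{b}{b}=\tfrac{I\pm Z}{2}$, and expands the sum of squares — which is exactly your $q_{C,0}=\tfrac{1+c}{2}$, $q_{C,1}=\tfrac{1-c}{2}$, $q_{C,0}^2+q_{C,1}^2=\tfrac{1+c^2}{2}$ computation phrased via the reduced density matrix. Your closing remark about the $U$ versus $U^\dagger$ convention is the right thing to watch: the lemma's statement uses the convention in which the output state is $U^\dagger\ket{0^n}$, but as you note this is only a relabeling and does not affect the argument.
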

\begin{proof}[Proof of~\autoref{lem:sum of squares of marginal}]
Let $U$ be the $2^n\times2^n$ unitary matrix computed by the quantum circuit $C$, the square of the marginal probability (on input $\ket{0^n}$) can be calculated as follows. For each $b\in\{0,1\}$
\begin{align*}
q_{C,b}^2&=\left(\sum_{\substack{x\in\{0,1\}^n\\x_1=b}}\bra{0^n}U\ket{x}\bra{x}U^\dagger\ket{0^n}\right)^2 \, .
\intertext{By linearity and the fact that $\sum_{x\in\{0,1\}^n,\ x_1=n}\ket{x}\bra{x}=\ket{b}\bra{b}\otimes I^{\otimes n-1}$, the equation becomes}
&=\left(\bra{0^n}U\left(\ket{b}\bra{b}\otimes I^{\otimes n-1}\right)U^\dagger\ket{0^n}\right)^2=\tr\left(\left(\ket{b}\bra{b}\otimes I^{\otimes n-1}\right)U^\dagger\ket{0^n}\bra{0^n}U\right)^2 \, .
\end{align*}
Next, use the fact that $\ket{0}\bra{0}=\frac{I+Z}{2}$ and $\ket{1}\bra{1}=\frac{I-Z}{2}$ and denote $\Pi=U^\dagger\ket{0^n}\bra{0^n}U$, we have
\begin{align*}
\sum_{b\in\{0,1\}}q_{C,b}^2 &= \tr\left(\frac{I^{\otimes n}+Z\otimes I^{\otimes n-1}}{2}\Pi\right)^2+\tr\left(\frac{I^{\otimes n}-Z\otimes I^{\otimes n-1}}{2}\Pi\right)^2\\
&=\frac{\tr\left(I^{\otimes n}\Pi\right)^2+\tr\left(Z\otimes I^{\otimes n-1}\Pi\right)^2}{2}=\frac{1+\tr\left(Z\otimes I^{\otimes n-1}\Pi\right)^2}{2} \, .
\end{align*}
\end{proof}

\begin{lemma}[Some useful properties in tensor network for $2$-qubit quantum circuits]\label{lem:tensor net properties}
Let $A$ be the unitary matrix of a $2$-qubit gate, we have the following.
\begin{itemize}
\item (Trace) For any $2$-qubit unitary matrix $A$, we have
\begin{equation*}
    \tikzfig{tikz/trace} \ =\  \tr(A) \, .
\end{equation*}
\item (Matrix multiplication) For any $2$-qubit unitary matrices $A,B$, we have
\begin{equation*}
    \tikzfig{tikz/matrix-mult-1} \ =\  \tikzfig{tikz/matrix-mult-2} \, .
\end{equation*}
\item (Matrix vector multiplication) For any $2$-qubit state $\bra{\psi}$, we have
\begin{equation*}
    \tikzfig{tikz/matrix-vec-mult-1} \ =\  \bra{\psi}A \, .
\end{equation*}
\item (Pauli basis)
\begin{equation}\label{eq:pauli identity}
    \tikzfig{tikz/pauli-identity-1} = \frac{1}{2}\sum_{\sigma\in\{I,X,Y,Z\}}\tikzfig{tikz/pauli-identity-2} \, .
\end{equation}
\end{itemize}
\end{lemma}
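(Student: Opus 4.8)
The plan is to treat the first three identities as immediate consequences of the contraction semantics fixed in \autoref{sec:prelim tensor net}, and to isolate the Pauli identity \autoref{eq:pauli identity} as the only one carrying genuine content. Recall from \autoref{fig:tensor net} that a box denotes $\sum_{b_i,b_j}A_{b_i,b_j}\ket{b_i}\bra{b_j}$, a triangle denotes a bra or ket, and every internal line denotes a shared $\{0,1\}$-index summed against a Kronecker delta. Each ``plumbing'' identity is then proved simply by writing out the index sum produced by the prescribed wiring.

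Concretely, first I would close each right (bra) leg of $A$ back onto the matching left (ket) leg; the delta on each line forces the indices to agree and be summed, so the diagram evaluates to $\sum_b A_{b,b}=\tr(A)$ (with $b$ ranging over $\{0,1\}^2$ in the two-qubit case). For matrix multiplication, connecting the right legs of $A$ to the left legs of $B$ contracts the shared middle index and yields $\sum_j A_{i,j}B_{j,k}=(AB)_{i,k}$, which is exactly the right-hand diagram. For matrix--vector multiplication, wiring the leg of $\bra{\psi}$ into the left legs of $A$ gives $\sum_i (\bra{\psi})_i\,A_{i,j}=(\bra{\psi}A)_j$. In all three cases the only fact invoked is that a line is a delta and that contraction is summation, so these steps are bookkeeping rather than mathematics.

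The substantive step is \autoref{eq:pauli identity}, which I would derive from orthogonality of the Pauli basis. The matrices $I,X,Y,Z$ are Hermitian and satisfy $\tr(\sigma\tau)=2\,\delta_{\sigma,\tau}$, hence form an orthogonal basis of the four-dimensional space of $2\times2$ matrices; this gives the resolution $M=\tfrac12\sum_{\sigma}\tr(\sigma M)\,\sigma$ for every $M$. Expanding $\tr(\sigma M)=\sum_{k,l}\sigma_{k,l}M_{l,k}$ and matching the coefficient of each $M_{l,k}$ on both sides forces the entrywise completeness relation $\tfrac12\sum_{\sigma}\sigma_{i,j}\sigma_{k,l}=\delta_{i,l}\,\delta_{j,k}$. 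Reading the left tensor of \autoref{eq:pauli identity} as the delta-lines $\delta_{i,l}\delta_{j,k}$ and the right tensor as $\tfrac12\sum_\sigma$ of two Pauli boxes sharing the summed label makes this precisely the claimed diagrammatic equality.

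The hard part will be matching the index pattern of the completeness relation to the drawn diagram: depending on which legs are read as kets and which as bras, one could land on $\delta_{i,l}\delta_{j,k}$ rather than $\delta_{i,k}\delta_{j,l}$. Pinning down the left-is-ket, right-is-bra convention of \autoref{sec:prelim tensor net} removes this ambiguity, after which the verification is immediate; the remaining three identities require no such choices at all.
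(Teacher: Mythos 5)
Your proposal is correct, but note that the paper does not actually prove \autoref{lem:tensor net properties}: it is recorded in the appendix as a list of well-known tensor-network facts, with the reader pointed to references for details. Your verification is the natural one and is complete — the first three items are indeed pure index bookkeeping under the left-is-ket, right-is-bra convention of Section~\ref{sec:prelim tensor net}, and your derivation of the completeness relation $\tfrac12\sum_{\sigma}\sigma_{ij}\sigma_{kl}=\delta_{il}\delta_{jk}$ from the orthogonality $\tr(\sigma\tau)=2\delta_{\sigma,\tau}$ (which is itself the first item of \autoref{lem:pauli tensor net}) is the standard argument; the index pattern $\delta_{il}\delta_{jk}$ rather than $\delta_{ik}\delta_{jl}$ is the right one, consistent with how \autoref{eq:pauli identity} is later applied to pairs of wires in the proof of \autoref{lem:single}. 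One cosmetic remark: the paper's $Y$ is the negative of the usual convention, but since $Y$ enters the completeness relation only through $Y\otimes Y$, this changes nothing in your argument.
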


\begin{lemma}[Pauli matrices in tensor network]\label{lem:pauli tensor net}
We have the following.
\begin{itemize}
\item (Orthogonality)
\[
\tr(\sigma\sigma') = \tikzfig{tikz/pauli-trace} = \left\{\begin{array}{ll}
2     & \text{, if }\sigma=\sigma' \\
0     & \text{, if }\sigma\neq\sigma' \, .
\end{array}
\right.
\]
\item (Evaluating on $\ket{0}$)
\[
\tikzfig{tikz/pauli-zero} = \left\{\begin{array}{ll}
1     & \text{, if }\sigma\in\{I,Z\} \\
0     & \text{, if }\sigma\in\{X,Y\} \, .
\end{array}
\right.
\]
\end{itemize}
\end{lemma}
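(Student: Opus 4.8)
The plan is to prove both bullets by direct evaluation from the explicit $2\times 2$ matrices for $I,X,Y,Z$ recorded in Section~\ref{sec:prelim}; the Pauli set is small enough that no structural argument is needed beyond reading off entries and traces. First I would translate each graphical identity into the scalar it denotes: the loop $\tikzfig{tikz/pauli-trace}$ of two contracted boxes is the Hilbert--Schmidt pairing $\tr(\sigma\sigma')$, while $\tikzfig{tikz/pauli-zero}$, a box with both legs capped by the $\ket{0}$ and $\bra{0}$ triangles, is the scalar $\bra{0}\sigma\ket{0}$. It then suffices to verify the two finite tables of values.

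For the orthogonality relation I would split on whether $\sigma=\sigma'$. On the diagonal, every Pauli squares to the identity ($X^2=Y^2=Z^2=I$, and $I^2=I$), so $\tr(\sigma\sigma)=\tr(I)=2$. Off the diagonal, the product $\sigma\sigma'$ is always traceless: if exactly one factor is $I$ then $\sigma\sigma'$ equals the other, non-identity, Pauli, which has trace $0$; and if $\sigma,\sigma'$ are two distinct non-identity Paulis then $\sigma\sigma'=\pm i\,\sigma''$ for the remaining non-identity Pauli $\sigma''$ (by the anticommutation relations), which again has trace $0$. Hence $\tr(\sigma\sigma')=2\,\delta_{\sigma,\sigma'}$, as claimed.

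For the evaluation on $\ket{0}$, I would use that $\bra{0}\sigma\ket{0}$ is simply the top-left entry $\sigma_{00}$ of $\sigma$. Reading these off the matrices in Section~\ref{sec:prelim} gives $I_{00}=Z_{00}=1$ and $X_{00}=Y_{00}=0$, which is exactly the split between $\{I,Z\}$ and $\{X,Y\}$ in the statement.

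There is no genuinely hard step here; the only point requiring care is the bookkeeping between the diagrammatic and algebraic notations---confirming that the two pictures really denote $\tr(\sigma\sigma')$ and $\bra{0}\sigma\ket{0}$, with the correct placement of the $\ket{0}/\bra{0}$ caps. Once that identification is fixed, both bullets follow from the finite checks above, and this supplies precisely the orthogonality and boundary-evaluation facts invoked in the proofs of Lemma~\ref{lem:single} and Lemma~\ref{lem:markov chain}.
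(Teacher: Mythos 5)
Your proof is correct: the paper records Lemma~\ref{lem:pauli tensor net} in the appendix as a well-known fact and supplies no proof of its own, and your direct verification from the explicit $2\times 2$ matrices (each Pauli squares to $I$, any two distinct Paulis multiply to a traceless matrix, and $\bra{0}\sigma\ket{0}$ is the top-left entry $\sigma_{00}$) is exactly the routine finite check being taken for granted. The only point worth flagging is that the paper's $Y$ differs from the usual convention by a sign, which changes neither $Y^2=I$ nor $Y_{00}=0$, so both bullets hold as you argued and supply the orthogonality and boundary evaluations used in Lemmas~\ref{lem:single} and~\ref{lem:markov chain}.
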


\begin{lemma}[Disjoint light cone]\label{lem:disjoint light cone}
Let $C$ be a quantum circuit and $I=\{i_1,,i_2,\dots,i_m\}$ be output qubits having disjoint light cone. For each $x_I\in\{0,1\}^{|I|}$, denote the marginal probability of the output qubits in $I$ being measured to $x_I$, we have
\[
q_C(I,x_I) = \prod_{j=1}^mq_C\left(\{i_j\},x_{i_j}\right) \, .
\]
\end{lemma}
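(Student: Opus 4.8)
The plan is to express the joint marginal as a single expectation value and then exploit the tensor-product structure that disjointness of the light cones forces on the relevant part of the circuit. Writing $\Pi_{i_j,b}=\ket{b}\bra{b}$ for the projector onto outcome $b$ on output qubit $i_j$ (tensored with identity on the other wires), the same manipulation as in \autoref{lem:sum of squares of marginal} gives
\[
q_C(I,x_I) = \bra{0^n} U^\dagger \Big( \prod_{j=1}^m \Pi_{i_j,x_{i_j}} \Big) U \ket{0^n},
\]
where the projectors commute because they act on distinct qubits. The goal is to show this quantity factorizes as $\prod_{j=1}^m q_C(\{i_j\},x_{i_j})$.

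First I would cut the circuit down to the gates that matter, exactly as in the proof of \autoref{lem:marginal light cone}. Any gate lying outside every backward light cone of the measured qubits has all of its outgoing wires contracted directly between $U$ and $U^\dagger$ (either traced out, since the corresponding output is not in $I$, or feeding only gates already removed); using unitarity $U_g U_g^\dagger = I$ both copies of such a gate cancel and can be deleted. Peeling these gates off layer by layer from the output side replaces $U$ by the subcircuit $U'$ supported only on the gates in $\bigcup_j(\text{backward light cone of }i_j)$, and replaces $\ket{0^n}$ by the all-zeros state on the union of the input light cones $\bigcup_j L_j$, where $L_j$ denotes the light cone of $i_j$.

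The key step is the disjointness argument. Because the input light cones $L_1,\dots,L_m$ are pairwise disjoint, the backward light cones of the gates feeding distinct $i_j$ are themselves disjoint: if one gate $g$ fed into both $i_j$ and $i_k$, then every input qubit reachable backward from $g$ would lie in $L_j\cap L_k$, and this set is nonempty, contradicting $L_j\cap L_k=\emptyset$. Hence the gates of $U'$ partition into $m$ groups, the $j$-th acting only on the wires $L_j$, so $U'=\bigotimes_{j=1}^m U_j$ with $U_j$ supported on $L_j$ and $i_j$ among its outputs. Since the surviving initial state is the product $\bigotimes_j \ket{0^{L_j}}$ and each $\Pi_{i_j,x_{i_j}}$ lives inside the support of $U_j$, the expectation decouples across $j$, yielding $\prod_{j=1}^m \bra{0^{L_j}} U_j^\dagger \Pi_{i_j,x_{i_j}} U_j \ket{0^{L_j}} = \prod_{j=1}^m q_C(\{i_j\},x_{i_j})$, as claimed.

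I expect the main obstacle to be making the cancellation/peeling step fully rigorous rather than heuristic: formally one inducts on the number of surviving gates outside $\bigcup_j L_j$, at each stage choosing an output-most such gate $g$ all of whose outgoing wires are already contracted against $U^\dagger$ and invoking $U_g U_g^\dagger = I$ to remove it, with the product form of $\ket{0^n}$ being exactly what lets the surviving components fully decouple. Once this is in place, the disjointness-of-gates claim and the resulting tensor factorization are routine bookkeeping. Alternatively, the entire argument can be phrased in the tensor-network language of \autoref{sec:prelim tensor net}: after cancelling the out-of-cone gate pairs via $A A^\dagger = I$, the network computing $q_C(I,x_I)$ breaks into $m$ disconnected components indexed by the disjoint light cones, and the value of the whole network is the product of the values of the components.
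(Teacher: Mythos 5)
Your proposal is correct and follows essentially the same route as the paper's proof: cancel the gates outside the light cones of $I$ against their conjugates via unitarity, and observe that disjointness of the light cones makes the remaining tensor network split into $m$ disconnected components whose values multiply. The paper states this more tersely in tensor-network language with an illustrating example, while you spell out the peeling induction and the disjointness-of-gate-cones argument, but the underlying argument is identical.
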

\begin{proof}[Proof of~\autoref{lem:disjoint light cone}]
Without loss of generality, let us assume $I=\{1,2,\dots,m\}$ and let $U$ denote the unitary matrix computed by $C$. The marginal probability of the output qubits in $I$ being measure to $x_I$ is
\begin{align*}
q_C(I,x_I) &= \left(\bra{0^n}U\ket{x_I}\otimes I^{\otimes n-m}\right)\left(\bra{x_I}\otimes I^{\otimes n-m}U^\dagger\ket{0^n}\right)\\
&= \scalebox{.8}{\tikzfig{tikz/disjoint}} \, .
\intertext{Note that the gate in $C$ that does not have a left-to-right path connecting to any of the output qubits in $x_I$ will be cancelled by its conjugate in the above tensor network (See below for an illustrating example). Thus, the tensor network breaks into $m$ disjoint tensor networks where each of them corresponds to the marginal probability of one of the output qubits in $I$. This gives us}
&=\prod_{j=1}^mq_C\left(\{i_j\},x_{i_j}\right)
\end{align*}
as desired.

As an illustrating example, consider the 1D circuit in~\autoref{fig:circuit skeleton} and $I=\{1,8\}$. For each $x_1,x_8\in\{0,1\}$, we have
\begin{align*}
q_C(I,x_I)&=\tikzfig{tikz/disjoint-example}\\
&=\tikzfig{tikz/disjoint-example-1}\\
&=q_C(\{1\},x_1)\cdot q_C(\{8\},x_8) \, .
\end{align*}

\end{proof}

\end{document}